\newcommand{\comment}[1]{}
\newcommand\T{\rule{0pt}{2.3ex}}
\newlength\mystoreparindent
\long\def\/*#1*/{}  
\newtheorem{theorem}{Theorem} 
\newtheorem{thm}{Theorem} 
\newtheorem{defi}[thm]{Definition} 
\newtheorem{lem}[thm]{Lemma}
\newtheorem{assum}[thm]{Assumption}
\newtheorem{prop}[thm]{Proposition}
\newtheorem{corol}[thm]{Corollary}
\newenvironment{proof}[1][Proof]{\begin{trivlist}
\item[\hskip \labelsep {\bfseries #1}]}{\end{trivlist}}
\numberwithin{thm}{section}
\numberwithin{prf}{section}
\numberwithin{equation}{section}
\newcommand*{\QEDB}{\hfill\ensuremath{\square}}
\newcolumntype{Y}{>{\centering\arraybackslash}X}
\newcommand*\bigcdot{\mathpalette\bigcdot@{.5}}
\newcommand*\bigcdot@[2]{\mathbin{\vcenter{\hbox{\scalebox{#2}{$\m@th#1\bullet$}}}}}
\definecolor{mygreen}{RGB}{28,172,0}
\definecolor{mylilas}{RGB}{170,55,241}
\newcommand{\mcA}{\mathcal{A}}
\newcommand{\mcB}{\mathcal{B}}
\newcommand{\mcC}{\mathcal{C}}
\newcommand{\mcE}{\mathcal{E}}
\newcommand{\mcF}{\mathcal{F}}
\newcommand{\mcG}{\mathcal{G}}
\newcommand{\mcH}{\mathcal{H}}
\newcommand{\mcK}{\mathcal{K}}
\newcommand{\mcL}{\mathcal{L}}
\newcommand{\mcM}{\mathcal{M}}
\newcommand{\mcN}{\mathcal{N}}
\newcommand{\mcO}{\mathcal{O}}
\newcommand{\mcQ}{\mathcal{Q}}
\newcommand{\mcR}{\mathcal{R}}
\newcommand{\mcS}{\mathcal{S}}
\newcommand{\mcT}{\mathcal{T}}
\newcommand{\mcU}{\mathcal{U}}
\newcommand{\mcV}{\mathcal{V}}
\newcommand{\mcW}{\mathcal{W}}
\newcommand{\mcX}{\mathcal{X}}
\newcommand{\mcY}{\mathcal{Y}}
\newcommand{\mbE}{\mathbb{E}}
\newcommand{\mbN}{\mathbb{N}}
\newcommand{\mbP}{\mathbb{P}}
\newcommand{\mbR}{\mathbb{R}}
\newcommand{\mbV}{\mathbb{V}}
\newcommand{\ve}{\varepsilon}
\newcommand{\vp}{\varphi}
\newcommand{\vr}{\varrho}
\newcommand{\vs}{\varsigma}
\begin{document}

\title{\Large{The moment is here: a generalized class of estimators for fuzzy regression discontinuity designs}\thanks{I give special thanks to Sami Stouli for his extensive comments, suggestions, conversations, and guidance on this paper. I also thank Vincent Han, Arthur Lewbel, Adam McCloskey, Claudia Noack, Senay Sokullu, Frank Windmeijer, and participants in seminars at University College London for helpful comments, suggestions, and conversations. Code for estimation and inference in R and Python is available at \href{https://github.com/stuart-lane/LambdaFRD}{https://github.com/stuart-lane/LambdaFRD}.}}
\author{\large{Stuart Lane}\thanks{School of Economics, University of Bristol, UK, stuart.lane@bristol.ac.uk}\\}
\date{}

\maketitle

\begin{abstract}
The standard fuzzy regression discontinuity (FRD) estimator is a ratio of differences of local polynomial estimators. I show that this estimator does not possess any finite integer moments, regardless of local polynomial degree, kernel function, or bandwidth. The estimator is heavy-tailed in small samples or when the treatment probability discontinuity at the cutoff is small. I present a generalized class of FRD estimators which preserves all finite moments from the data, indexed by a single tuning parameter, and nesting both standard FRD and sharp (SRD) estimators. Simple deterministic values of the tuning parameter lead to substantial improvements in median bias, median absolute deviation, and root mean squared error. Confidence intervals typically give reliable small-sample coverage in simulations. Estimator stability and performance are demonstrated using data on class size effects on educational attainment. \\
\\
\noindent \textbf{Keywords:} Fuzzy regression discontinuity, treatment effect estimation, instrumental variables, moments, finite sample, causal inference \\
\end{abstract}

\thispagestyle{empty}

\newpage

\section{Introduction}

Regression discontinuity (RD) designs are important tools for treatment effect estimation and causal inference when the probability of treatment assignment depends on an observed running variable, and is discontinuous at some known cutoff \citep{thistlethwaite1960regression, hahn2001identification}. If treatment assignment is deterministic conditional on the running variable, then the design is sharp (SRD), and if treatment is random conditional on the running variable, then the design is fuzzy (FRD). These methods have been applied to a wide range of topics, including the economics of education and educational outcomes \citep{angrist1999using}, industrial organization \citep{busse20061}, political economy \citep{lee2001electoral}, environmental economics \citep{salman2022paris}, the effects of social/financial aid programs \citep{ludwig2007does}, and medical applications \citep{cattaneo2023guide}.

The treatment effect of interest in the fuzzy design is identified by the ratio of the differences of conditional expectation functions \citep{hahn2001identification}. The standard estimator uses four local polynomial regressions, applied separately to the outcome and treatment conditional expectation functions, on either side of the cutoff. Here, the numerator and denominator are treated as separate SRD objects; this is justified asymptotically via the delta method, but ratio estimators potentially lack finite moments in finite samples if the denominator can be close to 0 with sufficient probability (e.g., the just-identified linear instrumental variables (IV) estimator under joint error normality \citep{chao2013expository}). Despite a significant theoretical literature discussing optimal choices regarding the degree of polynomial regression \citep{gelman2019high}, choice of kernel \citep{cheng1997automatic, imbens2008regression} and bandwidth selection \citep{imbens2012optimal, calonico2020optimal}, this particular issue of finite-sample instability has received limited attention; though \citet{noack2024bias} develop a bias-aware test statistic that bypasses the delta method and is robust to identification strength, but they do not consider estimation.

In this paper, I show that the standard FRD estimator does not have finite integer moments in finite samples, regardless of the degree of local polynomial regression, the kernel function, or the bandwidth used. This follows as the density of the  denominator is bounded away from zero at the origin. Consequently, the sampling distribution has heavy tails, which can create significant problems for estimation and inference in small samples or with a small discontinuity in the treatment assignment probability at the cutoff. Heavy tails lead to substantial bias and large variability, and cause the finite-sample distribution to poorly approximate the asymptotic normal distribution, breaking down the delta method justification. This result provides an analogue to the lack of finite integer moments in finite samples for the ratio-IV estimator, as the local linear estimator with uniform kernel is well-known to be numerically equivalent to a ratio-IV estimator \citep{hahn2001identification, imbens2008regression}.

To fix the lack of finite moments for the FRD estimator, I introduce a class of estimators dependent on a single tuning parameter $\lambda$, nesting the standard estimator. I show that a continuum of estimators within this class preserve all finite moments within the data in finite samples; if the errors have finite moments of all orders (e.g., errors are normal, Laplacian etc.), then so will the new estimators. These estimators are asymptotically equivalent to the standard FRD estimator. Further, they are computationally simple, and are akin to a local nonparametric version of the $k$-class estimators from linear IV models \citep{nagar1959bias}. Simple choices for $\lambda$ dependent only on known model parameters (the effective sample size within the bandwidth and the degree of local polynomial regression) give strong performance in simulations; therefore, the researcher need not use selection algorithms such as cross-validation for $\lambda$.\footnote{While the degree of local polynomial regression $p$ is itself in principle a tuning parameter, it is typically treated as fixed at $p=1$ (with higher degrees sometimes employed as a robustness check) rather than as a data-driven tuning parameter (see e.g., \citet{gelman2019high}).} These estimators are similar to Fuller estimators in linear IV \citep{fuller1977some}. This class further demonstrates that the aforementioned equivalence of the local linear estimator with uniform kernel and the ratio-IV estimator is generalizable to any degree of local polynomial regression and choice of kernel function, thereby deepening the already well-established links between the linear IV and the FRD literature.

In simulations, the new estimators are shown to substantially improve on the standard FRD estimator in terms of median bias, median absolute deviation and root mean squared error, particularly with small sample sizes or a small jump in the treatment assignment probability (interpretable as a weaker instrument). The new estimators strictly dominate on root mean squared error, and have lower median bias and median absolute deviation for most parameter configurations. Confidence intervals exploiting the IV structure of the class typically have good small-sample coverage in simulations. While coverage is generally good in simulations, these confidence intervals are only pointwise asymptotically valid with undersmoothing; they do not have the MSE-optimal bandwidth pointwise asymptotic validity of bias-corrected confidence intervals \citep{calonico2014robust}, or the uniform asymptotic validity of bias-aware Anderson-Rubin confidence intervals \citep{noack2024bias}. 

This improved finite-sample stability of the new class is demonstrated in an empirical application, looking at class size effects using the \citet{angrist1999using} dataset; point estimates and confidence intervals based on existing estimators show larger variation across different bandwidths, with some wide confidence intervals symptomatic of small sample sizes or potentially indicating weak identification. The point estimates and confidence intervals from the new class are stable across bandwidths, with both the smallest variation in point estimates and also the tightest confidence intervals in general. This stability across bandwidths and sample sizes is in line with the extensive simulation results of this paper. Results are also qualitatively the same for both verbal and mathematics test scores.

In Section \ref{sec model}, I introduce the model and assumptions, and show that the standard FRD estimator does not have finite integer moments in finite samples. In Section 3, I introduce a generalized class of FRD estimators which preserves all finite moments from the data in finite samples. Section \ref{sec lambda} provides details on selecting the tuning parameter, and shows the asymptotic equivalence of the standard and new estimators under minimal conditions. Section \ref{sec monte_carlo} gives a Monte Carlo study, and Section \ref{sec empirical} gives an empirical application with the \citet{angrist1999using} dataset looking at class size effects on educational performance. Proofs of theorems and supporting lemmas are presented in the Appendix. Further extensive simulation results are presented in the Online Appendix.

\section{Model and the moment problem}\label{sec model}

The baseline model is
\begin{align}
    Y_i = m(X_i) + \tau D_i + U_i, \label{eq main model}
\end{align}
where $Y_i\in\mbR$ is the outcome of interest, $X_i\in \mcX\subseteq \mbR$ is the running variable, $m(\cdot): \mcX \to \mbR$ is some unknown continuous function, $\tau \in \mcT$ is the treatment effect of interest, where $\mcT\subset \mbR$ is compact, $D_i\in\{0,1\}$ is the treatment assignment, and $U_i\in\mbR$ is an unobserved error.
\begin{assum} \label{as fx}
    (i) $\{(Y_i,X_i,D_i)\}_{i=1}^n$ is an i.i.d. sample. (ii) $X_i$ has density $f_X(x)$ for each $x\in \mcX$. (iii) $\{U_i\}_{i=1}^n$ are independent conditional on $(X_1,\hdots,X_n)$, and for each $i$, $U_i|X_i \sim f_{U|X}(\cdot|X_i)$ such that: (a) $\sup_{x \in \mcX}\mbE[|U_i|^r|X_i = x] < \infty$ for all $r < r^*$, for some $r^* \in (0,\infty]$, and $\sup_{x \in \mcX}\mbE[|U_i|^r|X_i = x] = \infty$ for all $r \geq r^*$; (b) for any compact $\mcK \subset \mbR$, then $\inf_{(u,x)\in \mcK \times \mcX}f_{U|X}(u|x) > 0.$
\end{assum}
Parts (i)-(ii) of Assumption \ref{as fx} are standard, assuming an i.i.d. sample with a continuous running variable. Errors have unbounded support with strictly positive density over compact subsets in $\mbR$; this is satisfied for common distributions such as the normal, $t$, and Laplace distributions. They may also possess finite moments of all orders e.g., normal, or no finite integer moments at all e.g., Cauchy. Define $\pi: \mcX\to [0,1]$ such that $\pi(x) := \mbP\{D_i = 1 | X_i = x\} = \mbE[D_i | X_i = x]$ for each $x\in\mcX$. The function $\pi(x)$ is continuous everywhere except for a discontinuity at a single point $x_0 \in \mcX$, referred to as the cutoff. Then, define
\begin{align*}
    \mu_+(x_0) = \lim_{\nu\to 0}\mbE[Y_i|X_i = x_0 + \nu],&\,\,\, \mu_-(x_0) = \lim_{\nu\to 0}\mbE[Y_i|X_i = x_0 - \nu], \\
    \pi_+(x_0) = \lim_{\nu\to 0}\mbE[D_i|X_i = x_0 + \nu],&\,\,\, \pi_-(x_0) = \lim_{\nu\to 0}\mbE[D_i|X_i = x_0 - \nu].
\end{align*}

\begin{assum}\label{as pi}
    (i) $\mu_+(x_0)$, $\mu_-(x_0)$, 
    $\pi_+(x_0)$, and $\pi_-(x_0)$ exist. (ii) $\pi_+(x_0) \neq$ $ \pi_-(x_0)$. 
\end{assum}
Assumption \ref{as pi} ensures that the conditional probability of assignment is discontinuous at $x_0$, and therefore the cutoff is well-defined. Given Assumptions \ref{as fx} and \ref{as pi}, \citet{hahn2001identification} show that $\tau$ is identified by the ratio
\begin{align}\label{eq tau}
    \tau = \frac{\tau^Y}{\tau^D} = \frac{\mu_+(x_0) - \mu_-(x_0)}{\pi_+(x_0) - \pi_-(x_0)}.
\end{align}

\subsection{The standard estimator and no finite integer moments}\label{subsec standard est}
The standard approach to estimating \eqref{eq tau} uses local polynomial regressions to separately estimate each of $\mu_+(x_0)$, $\mu_-(x_0)$, $\pi_+(x_0)$, and $\pi_-(x_0)$. The four individual local polynomial estimators of degree $p\geq 0$ are
\begin{equation} \label{eq main estimators}
    \begin{aligned}
        \hat{\mu}_{p,+}(x_0) &= \frac{\displaystyle\sum_{i\in \mcN_+}Y_i K_{h}(X_i-x_0)\omega_{p,+,i}}{\displaystyle\sum_{i\in \mcN_+} K_{h}(X_i-x_0)\omega_{p,+,i}}, \,\,\,\, \hat{\mu}_{p,-}(x_0) = \frac{\displaystyle\sum_{i\in \mcN_-}Y_i K_{h}(X_i-x_0)\omega_{p,-,i}}{\displaystyle\sum_{i\in \mcN_-} K_{h}(X_i-x_0)\omega_{p,-,i}}, \\
        \hat{\pi}_{p,+}(x_0) &= \frac{\displaystyle\sum_{i\in \mcN_+}D_i K_{h}(X_i-x_0)\omega_{p,+,i}}{\displaystyle\sum_{i\in \mcN_+} K_{h}(X_i-x_0)\omega_{p,+,i}}, \,\,\,\, \hat{\pi}_{p,-}(x_0) = \frac{\displaystyle\sum_{i\in \mcN_-}D_i K_{h}(X_i-x_0)\omega_{p,-,i}}{\displaystyle\sum_{i\in \mcN_-} K_{h}(X_i - x_0)\omega_{p,-,i}},
    \end{aligned}
\end{equation}
where $K(\cdot)$ is some kernel function, $K_h(\upsilon) = K(\upsilon/h)/h$ for bandwidth $h$, $\mcN_+ = \{i: X_i \in \mcX_{h,+}\}$, $\mcN_- = \{i: X_i \in \mcX_{h,-} \}$, $\mcX_{h,-} = [x_0-h,x_0)$, $\mcX_{h,+} = [x_0,x_0+h]$, and where $\mcX_h = \mcX_{h,-} \cup \mcX_{h,+}$. Bandwidth $h > 0$ is taken as fixed.\footnote{Taking $h$ as fixed is natural for finite-sample analysis. The lack of moments is not a consequence of bandwidth choice, and cannot be remedied via appropriate bandwidth selection. In simulations, I provide clear evidence of the moment problem for $\hat{\tau}_1$ estimated with multiple data-driven bandwidths.} The generalized kernel weights in \eqref{eq main estimators} for $i \in \mcN_+$ are constructed as $\omega_{p,+,i} = e_1' \mcS^{-1}_{p,+} H_{p,i}$, where $e_1' = (1,0,\hdots,0)$ is a $(p+1)$-vector with first element 1 and 0 otherwise, $H_{p,i}$ is a $(p+1)$-vector with $j^{th}$ element $(X_i - x_0)^{j-1}$ and $\mcS_{p,+} = \sum_{i \in \mcN_+} K_h(X_i - x_0) H_{p,i}  H_{p,i}'$. The analogous definitions for $\mcS_{p,-}$ and $\omega_{p,-,i}$ are readily adapted by exchanging $i \in \mcN_+$ for $i \in \mcN_-$. These four local polynomial estimators are then combined to form the estimator
\begin{align}\label{eq tau_hat}
    \hat{\tau}_p = \frac{\hat{\tau}_p^Y}{\hat{\tau}_p^D} = \frac{\hat{\mu}_{p,+}(x_0) - \hat{\mu}_{p,-}(x_0)}{\hat{\pi}_{p,+}(x_0) - \hat{\pi}_{p,-}(x_0)}.
\end{align}
Define the effective sample to be the set of individuals $i \in \mcN_h$ for $\mcN_h = \mcN_+ \cup \mcN_-$, denote $n_+ = |\mcN_+|$, $n_- = |\mcN_-|$, and $n_h = n_+ + n_-$, and consider $Y_i$, $D_i$ and $H_{p,i}'$ for $i \in \mcN_h$. These individuals can be stacked into the $n_h\times 1$ vectors $Y$ and $D$, and $n_h\times(p+1)$ matrix $H_p$, and let $K := \textup{diag}(K_h(X_i - x_0))$ for $i \in \mcN_h$.\footnote{For notational clarity, $K(\cdot)$ denotes the scalar kernel function, $K_h(\cdot)$ denotes the scaled version, and $K:=\textup{diag}(K_h(X_i - x_0))$ denotes the associated diagonal matrix.} Then, \eqref{eq tau_hat} can also be expressed as
\begin{equation}\label{eq tau p matrix}
    \hat{\tau}_p = \frac{e_1'(\mcS_{p,+}^{-1}H_{p,+}'K_+Y_+ - \mcS_{p,-}^{-1}H_{p,-}'K_-Y_-)}{e_1'(\mcS_{p,+}^{-1}H_{p,+}'K_+D_+ - \mcS_{p,-}^{-1}H_{p,-}'K_-D_-)},
\end{equation}
where for vector or matrix $A$ with $i^{th}$ row $A_i'$, $A_+$ denotes just the rows of $A$ relating to individuals $i \in \mcN_+$, and $A_-$ denotes just the rows relating to individuals $i \in \mcN_-$ \citep{fan1996local}. I treat $p \geq 0$ as fixed, as is common in practice.\footnote{Researchers typically fix $p = 1$ following \citet{gelman2019high} (sometimes using higher degrees as a robustness check), rather than using e.g., cross-validation to select $p$.}
\begin{assum} \label{as kernel}
    (i) $K(\cdot)$ is a symmetric second-order $L_K$-Lipschitz kernel with support $[-1,1]$, strictly positive on $(-1,1)$. (ii) $K(\cdot) \in  C_b^{\vs}((-1,1)\setminus \{0\})$ for some $\vs \geq 1$, for $C_b^s(\mcA)$ the space of $s$-times continuously differentiable functions with bounded derivatives on $\mcA$. (iii) $K(\upsilon) \leq C_K$ for every $\upsilon\in[-1,1]$ for some $C_K < \infty$.
\end{assum}
All common kernels used for FRD estimation (the triangular, uniform and Epanechnikov kernels) satisfy Assumption \ref{as kernel}. Part (ii) requires $\vs$-times continuous differentiability with bounded derivatives for $\vs \geq 1$; while this is not usually explicitly assumed, it is satisfied for $\vs=\infty$ for all kernels used empirically, and so is a technical assumption rather than a meaningful restriction in practice. Differentiability of the kernel gives differentiability of $\hat{\tau}_p^D$ (see Lemma \ref{lem differentiability}), and is required as I show that configurations of $(X,D)$ exist that give exactly $\hat{\tau}_p^D = 0$, and then use geometric arguments to show that the density of $\hat{\tau}_p^D$ is bounded away from zero in an open neighborhood of $\hat{\tau}_p^D = 0$. This leads to the expectation integral diverging. The punctured interval in part (ii) allows for the triangular kernel $K(\upsilon) = (1 - |\upsilon|)1\{|\upsilon| < 1\}$, which is the most popular choice as it has optimal properties for boundary estimation \citep{cheng1997automatic, imbens2012optimal}, but is not differentiable at $\upsilon = 0$. 
\begin{assum}\label{as full rank}
    (i) For each $x \in \mcX_h$, $f_X(x)$ is bounded away from both 0 and $\infty$, and $\mbE[X_i^j]<\infty$ for $j\in\{1,\hdots,2p\}$. 
    (ii) For each $x \in \mcX_h$, $0 <\underline{\pi}\leq \pi(x) \leq \overline{\pi} < 1$.
    (iii) There exist treated and untreated observations both above and below the cutoff within the effective sample. 
    (iv) The eigenvalues of $\mcS_{p,+}$ and $\mcS_{p,-}$ 
    are bounded away from both 0 and $\infty$.
\end{assum}
Parts (i) and (ii) mildly strengthen standard regularity conditions to the bandwidth window $\mcX_h$ instead of an arbitrary open neighborhood of $x_0$ usually imposed for asymptotic results. This is necessary for finite-sample results. Part (iii) rules out perfect compliance within the observed effective sample, and excludes the pathological case of homogeneous treatment status; e.g., suppose $D_i = 0$ for every $i \in \mcN_h$ (which is theoretically compatible with the data generating process assumption in part (ii)). This trivially means $\hat{\tau}_p^D = 0$, hence $\hat{\tau}_p$ is undefined. Importantly, the divergence of moments for $\hat{\tau}_p$ persists even after imposing this mild structure on the observed effective sample, reinforcing the practical relevance of the results that follow. Part (iv) assumes that the sample kernel moment matrices are well-conditioned, implying $n_+,n_- \geq p+1$. Also note that $\sum_{i\in \mcN_+} K_{h}(X_i-x_0)\omega_{p,+,i} = \sum_{i\in \mcN_-} K_{h}(X_i-x_0)\omega_{p,-,i} = 1$ under Assumptions \ref{as kernel} and \ref{as full rank}, so the denominators can be dropped from the estimators in \eqref{eq main estimators}.

\begin{assum}\label{as gradient}
    For fixed $D$, define $g : \mcX_h^{n_h} \to \mbR$ such that $g(X) = \hat{\tau}_p^D$. Then, $\norm{\nabla g(X)}_2 > 0$ for each $X\in (\mcX_h^\circ \setminus \{x_0\})^{n_h}$, where $\mcA^\circ$ denotes the interior of $\mcA$, $\nabla$ denotes the gradient of $g(\cdot)$, and $\norm{\cdot}_2$ is the Euclidean norm.
\end{assum}
Assumption \ref{as gradient} ensures that the gradient of $g(X)$ is strictly positive for $X\in (\mcX_h^\circ \setminus \{x_0\})^{n_h}$, necessary for the aforementioned geometric arguments to obtain finite-sample results. By the continuous distribution of $X$ and the polynomial nature of $\hat{\tau}_p^D$, either of the following conditions is individually sufficient for Assumption \ref{as gradient}: (i) $p\geq 1$, (ii) $|K'(\upsilon)|>0$ for each $\upsilon \in (-1,1)\setminus \{0\}$.\footnote{For the $p=0$ local constant estimator with uniform kernel, $\hat{\tau}_0^D = n_+^{-1}\sum_{i \in \mcN_+}D_i - n_-^{-1}\sum_{i \in \mcN_-}D_i$ is unconditionally discrete, and $\mbP\left\{n_+ = 2, n_- = 2, \sum_{i \in \mcN_+} D_i = 1, \sum_{i \in \mcN_-} D_i= 1\right\} > 0$ for $n \geq 4$ under the maintained assumptions. Therefore $\mbP\{\hat{\tau}_0^D = 0\} > 0$. Clearly $\mbP\{\hat{\tau}_0^Y \neq 0|X,D\} = 1$ since $Y_i$ has a continuous conditional distribution, then $\mbP\{|\hat{\tau}_0| = \infty\} > 0$, and so $\mbE[|\hat{\tau}_0|^r] = \infty$ for all $r > 0$.}  The punctured interior $\mcX_h^\circ \setminus \{x_0\}$ is again used as Assumption \ref{as kernel} does not require differentiability of $K(\upsilon)$ at $\upsilon \in \{-1,0,1\}$. These assumptions lead to the first theorem; that $\hat{\tau}_p$ has no finite integer moments.

\begin{theorem} \label{thm no moments}
Let Assumptions \ref{as fx}-\ref{as gradient} hold. Then, $\mbE[|\hat{\tau}_{p}|^r] < \infty$ if and only if $r < \min\{r^*,1\}$, $\mbE[|\hat{\tau}_{p}|^r] = \infty$ otherwise.
\end{theorem}
Theorem \ref{thm no moments} holds regardless of the degree of local polynomial regression, kernel function, or bandwidth used. Even with normal errors with finite moments of all orders, $\hat{\tau}_p$ does not have a finite mean, because the density of the denominator $\hat{\tau}_p^D$ of $\hat{\tau}_p$ is bounded away from zero over some open neighborhood containing 0. This causes $\mbE[|\hat{\tau}_p^Y/\hat{\tau}_p^D|^r]$ to behave like $C^r\cdot\int_0^a x^{-r}dx$ for some $C \neq 0$, which will diverge if $r \geq 1$ for $a>0$. Due to this, the estimator may be highly dispersed in small samples, and potentially lead to inaccurate inferences. Simulations in Section \ref{sec monte_carlo} provide clear evidence of the moment problem for the local linear estimator $\hat{\tau}_{1}$ in small samples or with a small discontinuity in the treatment assignment probability.

\section{A generalized class of FRD estimators}\label{sec class}

It is well known \citep[see e.g.,][]{imbens2008regression} that when (\ref{eq tau_hat}) is estimated using $p = 1$ and the uniform kernel, then $\hat{\tau}_1$ is numerically equivalent to the IV estimator $\tilde{\tau}_{IV}$ with instrument $Z_i = 1\{X_i \geq x_0\}$ in the model
\begin{equation}\label{eq simple_reg}
    Y_i = V_i'\delta + \tau D_i + U_i,
\end{equation}
where $V_i' = ( 1, \, Z_i(X_i - x_0), \, (1-Z_i)(X_i - x_0) )$ are included exogenous regressors and $\delta = (\delta_0, \, \delta_1, \, \delta_2)'$, and \eqref{eq simple_reg} is restricted to just the effective sample $i\in \mcN_h$ \citep[e.g.,][]{hahn2001identification}. This gives the IV estimator
\begin{align}\label{eq tau iv imbens}
    \tilde{\tau}_{IV} = (Z'M_{V}D)^{-1}Z'M_{V}Y,
\end{align}
where $Z$ is the $n_h\times 1$ vector with $i^{th}$ element $Z_i$, $V$ is the $n_h\times 3$ matrix with $i^{th}$ row $V_i'$, and for some general $m \times q$ matrix $A$, then $M_A = I_m - P_A$ for $P_A = A(A'A)^{-1}A'$ and $I_m$ the $m \times m$ identity matrix. $\tilde{\tau}_{IV}$ is computationally simple and typically numerically similar to $\hat{\tau}_1$ with triangular kernel, and only requires one pass through the data instead of computing four separate local polynomial regressions. For this reason, $\tilde{\tau}_{IV}$ is also used in practice.

To my knowledge, the fact that the estimator in \eqref{eq tau iv imbens} can be generalized has not been explicitly noted. Numerous papers such as \citet{hahn2001identification}, \citet{imbens2008regression}, and \citet{noack2024bias} state that the equivalence holds for the local linear estimator with uniform kernel.  However, the equivalence in fact holds for any kernel function and any degree of local polynomial regression. For general $p$, denote the weighted data by $\tilde{Y} = K^{1/2}Y$, $\tilde{D} = K^{1/2}D$, $\tilde{Z} = K^{1/2}Z$, $\tilde{U} = K^{1/2}U$, and $\tilde{V}_p = K^{1/2}V_p$, where $V_p$ has $i^{th}$ row $V_{p,i}' = (1, \, Z_i\tilde{H}_{p,i}', \,  (1-Z_i)\tilde{H}_{p,i}')$ for $\tilde{H}_{p,i}$ the $p$-vector with $j^{th}$ element $(X_i - x_0)^j$. Therefore, $\tilde{Y}$, $\tilde{D}$, $\tilde{Z}$ and $\tilde{U}$ are $n_h$-vectors, and $\tilde{V}_p$ is an $n_h\times (2p+1)$ matrix. The weighted model then becomes
\begin{equation}\label{eq simple_reg p_degree}
    \tilde{Y} = \tilde{V}_{p}\delta + \tau\tilde{D}+ \tilde{U},
\end{equation} 
with $\delta\in \mbR^{2p+1}$, and the IV estimator of $\tau$ in \eqref{eq simple_reg p_degree} with instrument vector $\tilde{Z}$ is
\begin{align}\label{eq tau iv general kernel}
    \hat{\tau}_{IV,p} = \left(\tilde{Z}'M_{\tilde{V}_p}\tilde{D}\right)^{-1} \tilde{Z}M_{\tilde{V}_p}\tilde{Y}.
\end{align}

\begin{prop} \label{prop equivalence iv}
Let Assumptions \ref{as fx}-\ref{as full rank} hold. Then $\hat{\tau}_{IV,p} \equiv \hat{\tau}_p$.
\end{prop}
This proposition shows that the well-known numerical equivalence between \eqref{eq tau_hat} and \eqref{eq tau iv imbens} when using a uniform kernel and local linear regression is generalizable to any kernel function and degree of local polynomial regression. Hence, standard FRD estimators can be implemented as a single weighted local IV estimator rather than the ratio of differences of four separate local polynomial regressions. Therefore, the general form in \eqref{eq tau iv general kernel} offers a computationally simple way of computing FRD estimators, and deepens the links between the FRD and linear IV literature. However, numerical equivalence implies that $\hat{\tau}_{IV,p}$ will also lack finite integer moments in finite samples.
\begin{corol}\label{cor tau_iv}
    Let Assumptions \ref{as fx}-\ref{as gradient} hold. Then, $\mbE[|\hat{\tau}_{IV,p}|^r] < \infty$ if and only if $r < \min\{r^*,1\}$, $\mbE[|\hat{\tau}_{IV,p}|^r] = \infty$ otherwise. 
\end{corol}
Corollary \ref{cor tau_iv} follows immediately by combining Theorem \ref{thm no moments} and Proposition \ref{prop equivalence iv}, and provides an analogue to the lack of finite moments of the just-identified ratio-IV estimator in linear IV models with joint normal errors (see e.g., \citet{chao2013expository}).

\subsection{The $\lambda$-class of generalized FRD estimators}
Consider a generalized class of estimators with parameter $\lambda \in [0,1]$ of the form
\begin{align}\label{eq lambda class estimators}
    \hat{\tau}_{\lambda,p} = \left(\tilde{D}'M_{\tilde{V}_p}(I_{n_h} - \lambda M_{M_{\tilde{V}_p}\tilde{Z}})M_{\tilde{V}_p}\tilde{D}\right)^{-1} \tilde{D}'M_{\tilde{V}_p}(I_{n_h} - \lambda M_{M_{\tilde{V}_p}\tilde{Z}})M_{\tilde{V}_p}\tilde{Y}.
\end{align}
\eqref{eq lambda class estimators} has a structure akin to the $k$-class estimators from linear IV models \citep[e.g.,][]{nagar1959bias}, and the class nests $\hat{\tau}_p$ when setting $\lambda = 1$.
\begin{prop} \label{prop equivalence 2sls}
    Let Assumptions \ref{as fx}-\ref{as full rank} hold. Then $\hat{\tau}_{1,p} \equiv \hat{\tau}_{p}$.
\end{prop}
The proof follows from least squares algebra. $\hat{\tau}_{\lambda,p}$ can be expressed in terms of the numerator $\hat{\tau}_p^Y$ and denominator $\hat{\tau}_p^D$ of $\hat{\tau}_p$ as
\begin{align}\label{eq tau lambda expanded}
    \hat{\tau}_{\lambda,p} = \left(\lambda \tilde{\Gamma}_p(\hat{\tau}_p^D)^2 + (1-\lambda)\tilde{D}'M_{\tilde{V}_p}\tilde{D}\right)^{-1}\left(\lambda \tilde{\Gamma}_p\hat{\tau}_p^Y\hat{\tau}_p^D + (1-\lambda)\tilde{D}'M_{\tilde{V}_p}\tilde{Y}\right),
\end{align}
where $\tilde{\Gamma}_p$ is a scalar function of $X$, $K(\cdot)$, and $p$.\footnote{$\tilde{\Gamma}_p = \Gamma_{p,+}\Gamma_{p,-}/(\Gamma_{p,+}+\Gamma_{p,-})$, where $\Gamma_{p,+}$ and $\Gamma_{p,-}$ are the Schur complements corresponding to the lower $p\times p$ blocks in the  $(p+1)\times (p+1)$ sample kernel moment matrices $\mcS_{p,+}$ and $\mcS_{p,-}$, respectively. See the proof of Proposition \ref{prop equivalence iv} in the Appendix for more details.} In \eqref{eq tau lambda expanded}, $\lambda$ is a mixing parameter, where $\lambda = 1$ yields the standard FRD estimator $\hat{\tau}_p = \hat{\tau}_p^Y/\hat{\tau}_p^D$ in \eqref{eq tau_hat}, and $\lambda = 0$ gives the ordinary least squares estimator of $\tau$ in \eqref{eq simple_reg p_degree}, which is the standard SRD estimator for \eqref{eq main model} when treatment assignment is deterministic conditional on $X_i$. The class $\hat{\tau}_{\lambda,p}$ for $\lambda\in[0,1]$ therefore defines a continuum of estimators mixing between the standard FRD and SRD estimators. 
\begin{assum}\label{as deltasample}
There exists a set $\mcN_+(\delta,\kappa) \subseteq \mcN_{+}$ with $|\mcN_+(\delta,\kappa)| = 2p+1$, such that:
     (i) $|X_i - X_j| \geq \delta$ for each $i\neq j$, $i,j \in \mcN_+(\delta,\kappa)$, for some $\delta > 0$;
     (ii) $K_h(X_i - x_0) \geq \kappa$ for each $i \in \mcN_+(\delta, \kappa)$, for some $\kappa > 0$; 
     (iii) $D_i = 1$ and $D_j = 0$ for some $i, j \in \mcN_+(\delta, \kappa)$.
There exists some set $\mcN_-(\delta,\kappa) \subseteq \mcN_{-}$ with $|\mcN_-(\delta,\kappa)| = 2p+1$ with the equivalent properties.
\end{assum}
Assumption \ref{as deltasample} requires mild structure on the observed effective sample, stating that on each side of the cutoff, there are at least $2p+1$ individuals with distinct values $X_i$ with non-negligible kernel weights (i.e., not boundary points if the kernel is boundary-vanishing, such as the triangular or Epanechnikov kernels), and treatment variation among these individuals. Part (iii) mildly strengthens Assumption \ref{as full rank}(iii), as two individuals with heterogeneous treatment statuses must have distinct $X_i$ values. This minimal structure will hold in scenarios of practical relevance (for $p=1$, this requires just three individuals on each side of the cutoff with distinct $X_i$, non-negligible kernel weights, and varying treatments between two of the three), and is sufficient to preserve all finite moments in the data when $\lambda < 1$.
\begin{theorem} \label{thm lambda moments}
    Let Assumptions \ref{as fx}-\ref{as gradient} and \ref{as deltasample} hold, and $ \lambda \in [0,1)$. Then,
    $\mbE[|\hat{\tau}_{\lambda,p}|^r] < \infty$ for all $r < r^*$, $\mbE[|\hat{\tau}_{\lambda,p}|^r] = \infty$ otherwise.
\end{theorem}
The estimator $\hat{\tau}_{\lambda,p}$ preserves all finite moments in the data, and therefore will have finite moments of all orders with e.g., normal errors. Setting $\lambda < 1$ acts as a form of regularization that bounds the denominator away from zero. This is because the term $\tilde{D}'M_{\tilde{V}_p}\tilde{D}$ is bounded away from zero (see Lemma \ref{lem positivity}), and is added to the non-negative term $\lambda \tilde{\Gamma}_p (\hat{\tau}_p^D)^2$. This is similar to how Tikhonov regularization shifts the spectrum of a linear operator to bound eigenvalues away from zero, although unlike the Tikhonov parameter, $\lambda$ appears in both the numerator and denominator of \eqref{eq lambda class estimators}. With appropriate choices of $\lambda$, $\hat{\tau}_{\lambda,p}$ has excellent properties in the large-scale Monte Carlo study of Section \ref{sec monte_carlo} and the Online Appendix.

\section{Asymptotics and the selection of $\lambda$}\label{sec lambda}

The estimator $\hat{\tau}_{p}$ has well-studied asymptotic properties \citep[e.g.,][]{fan1996local,hahn2001identification,imbens2012optimal}. Only mild assumptions are needed to establish asymptotic equivalence between $\hat{\tau}_{\lambda,p}$ and $\hat{\tau}_{p}$.
\begin{assum}\label{as asymptotics}
     (i) $h \to 0$, $nh \to \infty$, and $f_X(x_0)$ is continuous at $x_0$. (ii) $\lambda - 1 = o_p(1)$. (iii) $\lambda-1 = o_p(1/\sqrt{n_h})$, $nh^{2p+3} = O(1)$, and $m(\cdot),\pi(\cdot) \in C_b^{p+1}(\mcB_{\ve}(x_0)\setminus\{x_0\})$ for some $\ve > 0$. (iv) Assumption \ref{as fx}(iii)(a) is satisfied with $r^* \in (2,\infty]$.
\end{assum}

\begin{theorem} \label{thm lambda consistent}
Let Assumptions \ref{as fx}-\ref{as full rank}  and \ref{as asymptotics}(i)-(ii) hold. Then,  $\hat{\tau}_{\lambda,p} - \hat{\tau}_{p} \overset{p}{\to} 0$. If Assumptions \ref{as asymptotics}(iii)-(iv) also hold, then $\sqrt{n_h}(\hat{\tau}_{\lambda,p}-\tau) = \sqrt{2f_{X}(x_0) nh} \ (\hat{\tau}_p - \tau) + o_p(1)$. 
\end{theorem}
Theorem \ref{thm lambda consistent} states that $\hat{\tau}_{\lambda,p}$ has the same probability limit and asymptotic distribution as $\hat{\tau}_{p}$ (up to a consistently estimable scaling factor, reflecting normalization by the observed effective sample $n_h$) under standard bandwidth and regularity conditions, given mild assumptions on $\lambda$. If $nh^{2p+3} = o(1)$, a simple test statistic $T(\hat{\tau}_{\lambda,p})$ for testing $H_0: \tau = \tau_0$ vs. $H_1: \tau \neq \tau_0$ is then
\begin{align*}
    T(\hat{\tau}_{\lambda,p}) = \frac{\sqrt{n_h}(\hat{\tau}_{\lambda,p}-\tau_0)}{\sqrt{\hat{\mbV}(\hat{\tau}_{\lambda,p})}}, \,\,\, 
    \hat{\mbV}(\hat{\tau}_{\lambda,p}) = \frac{\tilde{D}'P_{M_{\tilde{V}_p}\tilde{Z}}\hat{\Omega}_{\tilde{U}} P_{M_{\tilde{V}_p}\tilde{Z}}\tilde{D}}{(\tilde{D}'M_{\tilde{V}_p}(I_{n_h} - \lambda M_{M_{\tilde{V}_p}\tilde{Z}})M_{\tilde{V}_p}\tilde{D})^2},
\end{align*}
where $\hat{\Omega}_{\tilde{U}}$ is some estimator of $\mbE[\tilde{U}_i^2(M_{\tilde{V}_p}\tilde{Z})_i(M_{\tilde{V}_p}\tilde{Z})_i'|X_i]$, and the variance estimator $\hat{\mbV}(\hat{\tau}_{\lambda,p})$ exploits the simple IV structure of \eqref{eq lambda class estimators}. Then, pointwise asymptotically valid confidence intervals with coverage probability $1-\alpha$ can be constructed as
\begin{align}\label{eq confidence interval}
    \mcC(\hat{\tau}_{\lambda,p}) = \left[\hat{\tau}_{\lambda,p} - z_{1-\alpha/2}\sqrt{\frac{\hat{\mbV}(\hat{\tau}_{\lambda,p})}{n_h}}\,,\, \hat{\tau}_{\lambda,p} + z_{1-\alpha/2}\sqrt{\frac{\hat{\mbV}(\hat{\tau}_{\lambda,p})}{n_h}}\,\right],
\end{align}
where $n_{h,eff} := n_h - 2(p+1)$ is the residual degrees of freedom, and $z_{1-\alpha/2}$ denotes the $1-\alpha/2$ critical value of the $N(0,1)$ distribution. The simple confidence interval in \eqref{eq confidence interval} has in general good finite-sample coverage in simulations, competitive with alternatives such as the bias-corrected confidence intervals of \citet{calonico2014robust} and the bias-aware confidence intervals of \cite{noack2024bias}.\footnote{It is important to note however that the confidence intervals in \eqref{eq confidence interval} require undersmoothing, unlike the \citet{calonico2014robust} confidence intervals, and are not uniformly asymptotically valid, unlike the \cite{noack2024bias} confidence intervals.} Without undersmoothing, $\hat{\tau}_{\lambda,p}$ and $\hat{\tau}_{1,p}$ have the same asymptotic bias (up to scaling); constructing bias-corrected estimators and confidence intervals is an important topic for future study, but is beyond the scope of this paper.

\subsection{Choosing $\lambda$}

Selection of $\lambda$ is important for estimator performance. $\lambda = 1$ gives an estimator without finite integer moments in finite samples, and $\lambda = 0$ gives the SRD estimator consistent only in sharp designs. Define the function
\begin{align} \label{eq lambda function}
    \Lambda(\psi) = 1 - \psi/n_{h,eff}
\end{align}
for $\psi \in [0, n_{h,eff}]$.\footnote{$\Lambda(\psi)$ is also a function of $n_h$ and $p$, but this is suppressed for notational convenience.} $\Lambda(\psi)$ is akin to the Fuller correction in linear IV models \citep{fuller1977some, hahn2004estimation}. For the linear-IV Fuller analogue, popular choices are $\psi = 1$ and $\psi = 4$. I provide strong simulation evidence that the estimators based on setting either $\psi = 1$ and $\psi = 4$ in \eqref{eq lambda function} perform very well; $\psi = 4$ gives the best performance across a wide variety of models in terms of median bias, median absolute deviation, and root mean squared error. Although $\psi = 4$ provides the best performance, $\psi = 1$ still provides large improvements relative to $\hat{\tau}_{1,p}$. I leave the study of $\lambda$-optimality (in the sense of asymptotic MSE minimization or some other desired property) to future research, with clear evidence in simulations that the simple choices detailed above are sufficient for immediate, substantial improvements in estimator performance.

\section{Monte Carlo simulations}\label{sec monte_carlo}

I consider the model $Y_i = m(X_i) +  \tau D_i + U_i$ with
\begin{equation*}
    m(X_i) = 
    \begin{cases}
        0.48 + 1.27X_i + 7.18X_i^2 + 20.21X_i^3 + 21.54X_i^4 + 7.33X_i^5 & \textup{ if} \,\, X_i < 0, \\
        0.48 + 0.84X_i - 3.00X_i^2 + 7.99X_i^3 - 9.01X_i^4 + 3.56X_i^5 & \textup{ if} \,\, X_i \geq 0,
    \end{cases}
\end{equation*}
which is calibrated to \citet{lee2008randomized}. I set $X_i \sim N(0,1)$, $U_i\sim N(0,0.09)$, $x_0 = 0$ and $\tau = 0.04$. Three treatment assignment functions are used, given by
\begin{align*}
    \pi_1(x) &=  \pi_-1\{x < 0 \} + \pi_+1\{x \geq 0 \}, \\
    \pi_2(x) &= (\pi_-x + \pi_-)1\{-1 \leq x <0 \} +  (\pi_-x + \pi_+)1\{0 \leq x < 1 \} + 1\{x \geq 1 \}, \\
    \pi_3(x) &= \pi_- \exp(0.2x) 1\{x < 0 \} + (\pi_+ + \pi_-(1-\exp(-0.2x)))1\{x \geq 0 \},
\end{align*}
where $\pi_- = 1-\pi_+$. I set $\pi_+\in\{0.6,0.7,0.8,0.9\}$, which gives treatment probability discontinuities of $(\pi_+-\pi_-)\in\{0.2,0.4,0.6,0.8\}$, and denote $\pi_+-\pi_-$ as $\pi_0$ for convenience. Identification becomes stronger as $\pi_0$ increases, and $\pi_0 = 1$ gives a sharp design. Each experiment uses $n\in \{300, 600\}$, and 10,000 repetitions. Further simulations with $X_i\sim 2Beta(2,4)-1$, $U_i \sim t(2.5)$ (which has finite variance, but heavy tails and no finite third moment), and $m(\cdot)$ calibrated to \citet{ludwig2007does} are reported in the Online Appendix. In total, there are 192 parameter configurations across the full $(n, \pi(\cdot), \pi_0, f_X(\cdot), f_U(\cdot), m(\cdot))$ grid.

\subsection{Estimator results}

\begin{table}[h!]
\centering
\addtolength{\tabcolsep}{-2pt} 
\small
\begin{threeparttable}
\caption{Estimator results}
\vspace{-0.5em}
\begin{tabular*}{\textwidth}{@{\extracolsep{\fill}} c c c c c c c c c c c c c c}
\toprule
\multicolumn{2}{c}{$n=300$} & \multicolumn{4}{c}{Median bias} & \multicolumn{4}{c}{MAD} & \multicolumn{4}{c}{RMSE} \\
\cmidrule(r){1-2} \cmidrule(r){3-6} \cmidrule(l){7-10} \cmidrule(l){11-14}
$\pi_j$ & $\pi_0$ & $\hat{\tau}^{CCF}_{1}$ & $\hat{\tau}^{IK}_{1}$ & $\hat{\tau}^{CCF}_{\Lambda(4)}$ & $\hat{\tau}^{IK}_{\Lambda(4)}$ &
$\hat{\tau}^{CCF}_{1}$ & $\hat{\tau}^{IK}_{1}$ & $\hat{\tau}^{CCF}_{\Lambda(4)}$ & $\hat{\tau}^{IK}_{\Lambda(4)}$ &
$\hat{\tau}^{CCF}_{1}$ & $\hat{\tau}^{IK}_{1}$ & $\hat{\tau}^{CCF}_{\Lambda(4)}$ & $\hat{\tau}^{IK}_{\Lambda(4)}$ \\
\midrule
\multirow{4}{*}{1} & 0.2 & 0.09 & 0.12 & -0.01 & -0.00 & 0.50 & 0.49 & 0.09 & 0.09 & 89.22 & 126 & 0.14 & 0.14 \\
& 0.4 & 0.09 & 0.12 & 0.02 & 0.03 & 0.32 & 0.29 & 0.10 & 0.11 & 16.52 & 43.83 & 0.16 & 0.16 \\
& 0.6 & 0.08 & 0.09 & 0.04 & 0.05 & 0.22 & 0.19 & 0.12 & 0.11 & 12.74 & 36.02 & 0.18 & 0.17 \\
& 0.8 & 0.04 & 0.05 & 0.04 & 0.05 & 0.16 & 0.14 & 0.12 & 0.11 & 60.11 & 0.75 & 0.19 & 0.17 \\
\midrule
\multirow{4}{*}{2} & 0.2 & 0.09 & 0.14 & -0.01 & 0.00 & 0.50 & 0.48 & 0.09 & 0.09 & 143 & 41.39 & 0.15 & 0.15 \\
& 0.4 & 0.10 & 0.13 & 0.02 & 0.04 & 0.32 & 0.28 & 0.11 & 0.11 & 35.95 & 10.72 & 0.17 & 0.17 \\
& 0.6 & 0.07 & 0.08 & 0.04 & 0.05 & 0.22 & 0.19 & 0.12 & 0.12 & 79.85 & 6.23 & 0.19 & 0.18 \\
& 0.8 & 0.04 & 0.05 & 0.04 & 0.05 & 0.16 & 0.14 & 0.12 & 0.11 & 4.02 & 1.28 & 0.19 & 0.17 \\
\midrule
\multirow{4}{*}{3} & 0.2 & 0.07 & 0.12 & -0.01 & -0.00 & 0.52 & 0.50 & 0.09 & 0.09 & 837 & 76.48 & 0.14 & 0.14 \\
& 0.4 & 0.10 & 0.12 & 0.02 & 0.04 & 0.32 & 0.28 & 0.10 & 0.10 & 18.56 & 375 & 0.16 & 0.16 \\
& 0.6 & 0.07 & 0.09 & 0.04 & 0.06 & 0.22 & 0.19 & 0.12 & 0.12 & 17.79 & 3.02 & 0.18 & 0.18 \\
& 0.8 & 0.05 & 0.05 & 0.04 & 0.05 & 0.16 & 0.14 & 0.12 & 0.11 & 3.60 & 0.47 & 0.19 & 0.17 \\
\midrule
\midrule
\multicolumn{2}{c}{$n=600$} &  \multicolumn{4}{c}{Median bias} & \multicolumn{4}{c}{MAD} & \multicolumn{4}{c}{RMSE} \\
\cmidrule(r){1-2} \cmidrule(r){3-6} \cmidrule(l){7-10} \cmidrule(l){11-14}
$\pi_j$ & $\pi_0$ & $\hat{\tau}^{CCF}_{1}$ & $\hat{\tau}^{IK}_{1}$ & $\hat{\tau}^{CCF}_{\Lambda(4)}$ & $\hat{\tau}^{IK}_{\Lambda(4)}$ &
$\hat{\tau}^{CCF}_{1}$ & $\hat{\tau}^{IK}_{1}$ & $\hat{\tau}^{CCF}_{\Lambda(4)}$ & $\hat{\tau}^{IK}_{\Lambda(4)}$ &
$\hat{\tau}^{CCF}_{1}$ & $\hat{\tau}^{IK}_{1}$ & $\hat{\tau}^{CCF}_{\Lambda(4)}$ & $\hat{\tau}^{IK}_{\Lambda(4)}$ \\
\midrule
\multirow{4}{*}{1} & 0.2 & 0.14 & 0.20 & 0.01 & 0.03 & 0.47 & 0.45 & 0.09 & 0.09 & 33.85 & 10.44 & 0.14 & 0.15 \\
& 0.4 & 0.11 & 0.14 & 0.04 & 0.07 & 0.25 & 0.22 & 0.10 & 0.11 & 52.17 & 14.69 & 0.16 & 0.17 \\
& 0.6 & 0.07 & 0.09 & 0.05 & 0.07 & 0.16 & 0.14 & 0.11 & 0.11 & 1.35 & 0.32 & 0.17 & 0.16 \\
& 0.8 & 0.04 & 0.05 & 0.04 & 0.06 & 0.12 & 0.10 & 0.10 & 0.09 & 2.25 & 0.16 & 0.15 & 0.14 \\
\midrule
\multirow{4}{*}{2} & 0.2 & 0.15 & 0.21 & 0.01 & 0.04 & 0.46 & 0.44 & 0.09 & 0.10 & 56.41 & 3954 & 0.14 & 0.16 \\
& 0.4 & 0.12 & 0.14 & 0.05 & 0.08 & 0.25 & 0.22 & 0.11 & 0.12 & 19.39 & 5.45 & 0.17 & 0.18 \\
& 0.6 & 0.07 & 0.08 & 0.06 & 0.07 & 0.16 & 0.14 & 0.11 & 0.11 & 0.70 & 0.45 & 0.17 & 0.17 \\
& 0.8 & 0.05 & 0.05 & 0.05 & 0.06 & 0.11 & 0.10 & 0.10 & 0.09 & 0.20 & 0.16 & 0.15 & 0.14 \\
\midrule
\multirow{4}{*}{3} & 0.2 & 0.13 & 0.19 & 0.01 & 0.03 & 0.46 & 0.43 & 0.09 & 0.09 & 44.08 & 70.92 & 0.14 & 0.15 \\
& 0.4 & 0.11 & 0.14 & 0.04 & 0.07 & 0.25 & 0.22 & 0.11 & 0.11 & 23.93 & 6.08 & 0.16 & 0.17 \\
& 0.6 & 0.07 & 0.09 & 0.05 & 0.07 & 0.16 & 0.14 & 0.11 & 0.11 & 2.31 & 0.68 & 0.17 & 0.16 \\
& 0.8 & 0.04 & 0.05 & 0.04 & 0.06 & 0.11 & 0.10 & 0.10 & 0.09 & 0.42 & 0.16 & 0.15 & 0.14 \\
\bottomrule
\end{tabular*}
\caption*{\citet{lee2008randomized} design, $X_i \sim N(0,1)$, $U_i \sim N(0,0.09)$. Each experiment is repeated 10,000 times. Any values greater than 100 are rounded to integers for space.}
\label{table estimation lee normal concise}
\end{threeparttable}
\end{table}

In Table \ref{table estimation lee normal concise}, I report the median bias, median absolute deviation (MAD), and root mean squared error (RMSE) of both $\hat{\tau}_{1,1}$ and $\hat{\tau}_{\lambda,1}$. Median bias and MAD are chosen as they are robust to heavy-tailed distributions. $\hat{\tau}_{1,1}$ is estimated using the triangular kernel, and $\hat{\tau}_{\lambda,1}$ is estimated using the uniform kernel, which appears to give superior performance for this class in simulations.\footnote{$\hat{\tau}_{\lambda,1}$ with triangular kernel still significantly improves on $\hat{\tau}_{1,1}$ in simulations.} As all estimators use local polynomial degree $p=1$, I suppress this subscript. For $\lambda$, I use the function $\Lambda(\psi)$ defined in \eqref{eq lambda function}, with $\psi = 1$ and $\psi = 4$, and therefore the three estimators under consideration are $\hat{\tau}_1$, $\hat{\tau}_{\Lambda(1)}$ and $\hat{\tau}_{\Lambda(4)}$. I use the MSE-optimal bandwidth of \citet{imbens2012optimal} and coverage-optimal bandwidth of \citet{calonico2020optimal}, denoted by $IK$ and $CCF$ respectively and indicated via superscripts. Each estimator is computed with both bandwidths. As $\hat{\tau}_{\Lambda(4)}$ gives better performance than $\hat{\tau}_{\Lambda(1)}$ (which itself still significantly improves on $\hat{\tau}_1$), I report only $\hat{\tau}_{\Lambda(4)}$ and $\hat{\tau}_1$ for conciseness in the main text, and present the full six-estimator results in the Online Appendix.

The top panel presents results for $n = 300$. With the largest discontinuity at $\pi_0 = 0.8$, median bias is essentially determined by bandwidth instead of $\lambda$. As the jump size decreases, the median bias for $\hat{\tau}_{\Lambda(4)}$ remains stable, but increases somewhat for $\hat{\tau}_1$, particularly with $CCF$ bandwidth. The new estimator $\hat{\tau}_{\Lambda(4)}$ also performs well for median absolute deviation, and remains essentially unchanged across the different $\pi(\cdot)$ functions and jump discontinuities. Even with location and scale measures robust to heavy tails, $\hat{\tau}_1$ exhibits an increasingly large median bias and MAD as $\pi_0$ decreases. $\hat{\tau}_{\Lambda(4)}$ controls location and scale much better than $\hat{\tau}_1$ across the range of sample sizes and probability discontinuity jumps considered.

The difference in performance is even clearer when considering RMSE; the new $\hat{\tau}_{\Lambda(4)}$ gives significant improvements in RMSE, even with a large discontinuity. With a small discontinuity, performance of $\hat{\tau}_{\Lambda(4)}$ remains stable, and even in small samples, the estimator retains a small RMSE. The lack of finite moments is clear here for $\hat{\tau}_1$, which displays some extremely large values. Even when $\pi_0 = 0.8$, $\hat{\tau}_1$ often attains a significantly larger RMSE than $\hat{\tau}_{\Lambda(4)}$. Overall, the new estimator demonstrates excellent performance in both sample sizes, and performance is good even for $n=300$. While $\hat{\tau}_1$ unsurprisingly does improve with sample size, the improvements are modest, and the moment problem is still evident at $n = 600$. 

Table \ref{table estimator summary} summarizes the frequency for which each estimator (including $\hat{\tau}_{\Lambda(1)}$ with results reported in the Online Appendix) gives the best performance for each metric over the entire parameter grid (192 total combinations), and provides a strong justification for using $\hat{\tau}_{\Lambda(4)}$; $\hat{\tau}_{\Lambda(4)}$ strictly dominates in terms of RMSE, and gives the best median bias and median absolute deviation performance in 87.0\% and 96.4\% of parameter configurations respectively. Furthermore, $\hat{\tau}_{\Lambda(4)}$ performs well with both bandwidths; this is useful in practice, where researchers often consider multiple bandwidths to check the robustness of results. $\hat{\tau}_{\Lambda(1)}$ still improves on $\hat{\tau}_1$ for all three metrics in most data generating processes, but $\hat{\tau}_{\Lambda(4)}$ clearly provides the largest and most effective improvements.

\begin{table}[h]
\centering
\caption{Best performing estimator over 192 parameter configurations}
\begin{tabularx}{\textwidth}{c*{6}{>{\centering\arraybackslash}X}}
\hline
\toprule
Metric & $\hat{\tau}^{CCF}_{1}$ & $\hat{\tau}^{IK}_{1}$ & $\hat{\tau}^{CCF}_{\Lambda(1)}$ & $\hat{\tau}^{IK}_{\Lambda(1)}$ & $\hat{\tau}^{CCF}_{\Lambda(4)}$ & $\hat{\tau}^{IK}_{\Lambda(4)}$ \\
\midrule
Med. Bias & 12 & 6 & 7 & 0 & 119 & 48 \\
MAD       & 0  & 7 & 0 & 0 & 82 & 103 \\
RMSE      & 0  & 0 & 0 & 0 & 64 & 128 \\
\bottomrule
\label{table estimator summary}
\end{tabularx}
\vspace{-15pt}
\caption*{Summary of estimator performance by metric. For a given parameter configuration and metric, the best performing estimator is the one with the lowest absolute value of that metric. The new estimators are columns 3-6. There are 192 parameter configurations considered in total.}
\end{table}

\subsection{Inference}

Table \ref{table inference lee normal concise} reports coverage probabilities for a range of confidence intervals in the same models as Table \ref{table estimation lee normal concise}. $BC_1$ and $BC_2$ denote the bias-corrected confidence intervals from \citet{calonico2014robust}, using the $CCF$ and the $IK$ bandwidth selection algorithms, respectively. $\mcC_{\hat{\tau}_{\Lambda(4)}}^{CCF}$ and $\mcC_{\hat{\tau}_{\Lambda(4)}}^{IK}$ denote confidence intervals constructed using \eqref{eq confidence interval} formed with $\hat{\tau}^{CCF}_{\Lambda(4)}$ and $\hat{\tau}^{IK}_{\Lambda(4)}$, respectively. 95\% critical values are taken from the $t(n_{h,eff})$ distribution for constructing these intervals as a heuristic finite-sample correction. $AR_2$ denotes the bias-aware Anderson-Rubin confidence intervals of \citet{noack2024bias} using $ROT_2$ to estimate the smoothness bounds.\footnote{Results for the $AR$ test using $ROT_1$ are also presented in the Online Appendix.}

\begin{table}[h!]
\small
\centering
\begin{threeparttable}
\caption{Coverage rates of confidence intervals}
\vspace{-0.5em}
\begin{tabularx}{\textwidth}{cc *{5}{Y} @{\hspace{1.5em}} *{5}{Y}}
\toprule
& & \multicolumn{5}{c}{$n = 300$} & \multicolumn{5}{c}{$n = 600$} \\
\cmidrule(r){3-7} \cmidrule(l){8-12}
$\pi_j$ & $\pi_0$ & $BC_1$ & $BC_2$ & $\mcC_{\Lambda(4)}^{CCF}$ & $\mcC_{\Lambda(4)}^{IK}$ & $AR_2$ &
$BC_1$ & $BC_2$ & $\mcC_{\Lambda(4)}^{CCF}$ & $\mcC_{\Lambda(4)}^{IK}$ & $AR_2$ \\
\midrule
\multirow{4}{*}{1} 
& 0.2 & 98.2 & 98.4 & 93.8 & 94.8 & 96.2 & 98.4 & 98.6 & 94.4 & 94.6 & 95.9 \\
& 0.4 & 97.0 & 97.0 & 95.3 & 95.7 & 96.3 & 96.6 & 96.6 & 96.2 & 95.5 & 95.6 \\
& 0.6 & 94.9 & 94.9 & 96.6 & 96.2 & 96.5 & 94.2 & 94.7 & 95.8 & 94.6 & 95.6 \\
& 0.8 & 92.8 & 92.7 & 96.3 & 95.8 & 96.3 & 93.2 & 93.4 & 95.8 & 94.1 & 96.0 \\
\midrule
\multirow{4}{*}{2} 
& 0.2 & 98.2 & 98.2 & 93.4 & 93.8 & 96.1 & 98.4 & 98.8 & 94.6 & 94.3 & 95.4 \\
& 0.4 & 97.0 & 97.0 & 95.0 & 95.5 & 96.3 & 96.5 & 96.6 & 96.0 & 95.1 & 95.6 \\
& 0.6 & 94.6 & 94.5 & 96.3 & 96.2 & 95.9 & 94.7 & 94.8 & 95.9 & 94.3 & 95.6 \\
& 0.8 & 92.5 & 92.5 & 96.6 & 95.6 & 96.2 & 93.2 & 93.4 & 95.3 & 93.7 & 95.7 \\
\midrule
\multirow{4}{*}{3} 
& 0.2 & 98.3 & 98.3 & 93.1 & 93.4 & 95.7 & 98.3 & 98.6 & 95.3 & 94.7 & 95.6 \\
& 0.4 & 96.8 & 96.6 & 95.4 & 95.7 & 96.2 & 97.0 & 96.9 & 96.2 & 94.9 & 95.6 \\
& 0.6 & 94.9 & 94.7 & 96.6 & 96.5 & 96.4 & 94.7 & 94.4 & 96.2 & 94.6 & 96.0 \\
& 0.8 & 92.3 & 92.4 & 96.4 & 96.1 & 95.9 & 93.1 & 93.3 & 95.9 & 94.2 & 96.0 \\
\bottomrule
\end{tabularx}
\caption*{\citet{lee2008randomized} design, $X_i \sim N(0,1)$, $U_i \sim N(0,0.09)$. Each experiment is repeated 10,000 times.}
\label{table inference lee normal concise}
\end{threeparttable}
\end{table}
The new confidence intervals based on $\hat{\tau}_{\Lambda(4)}$ appear to perform well, with coverage probabilities close to the nominal 95\%. $\mcC_{\hat{\tau}_{\Lambda(4)}}^{IK}$ is typically closer to $95\%$ coverage than $\mcC_{\hat{\tau}_{\Lambda(4)}}^{\textsc{CCF}}$, but both are mostly within simulation error of correct coverage. The $BC_1$ and $BC_2$ confidence intervals have good coverage as expected, although they are slightly conservative when $\pi_0 \in \{0.2, 0.4\}$. $AR_2$ has good coverage, especially for $n = 600$. Coverage is stable across discontinuity jumps, which is unsurprising, given that the $AR_2$ test is robust to the strength of identification. The additional tables in the Online Appendix give generally qualitatively similar results, although both the $\mcC$ and $AR$ confidence intervals can have large size distortions in the high-curvature \citet{ludwig2007does} design. It must be noted that the new confidence intervals are only pointwise asymptotically valid with undersmoothing, whereas the \citet{calonico2014robust} confidence intervals are pointwise valid with MSE-optimal bandwidths, and the \citet{noack2024bias} confidence intervals are uniformly valid; however, the $\lambda$-class estimators are designed to fix a fundamentally small-sample problem, and simulations suggest that $\mcC_{\hat{\tau}_{\Lambda(4)}}^{CCF}$ often gives good coverage in small samples. The $\mcC$ confidence intervals should therefore be seen as complementary to existing confidence intervals, and applied researchers may benefit from reporting $BC$ or $AR$ confidence intervals in conjunction with $\mcC$ when using $\lambda$-class estimators.

\section{Empirical application}\label{sec empirical}

I revisit the data of \citet{angrist1999using}, who estimate the effect of class sizes on test scores in Israel.\footnote{Data available at: \href{https://economics.mit.edu/people/faculty/josh-angrist/angrist-data-archive}{https://economics.mit.edu/people/faculty/josh-angrist/angrist-data-archive}.} Class sizes are determined using the Maimonides' rule, where class size should increase mechanically up to a maximum size of 40 students per class. When a $41^{st}$ student is enrolled, the class should split into two classes with average size 20.5. Class sizes should then mechanically increase until 80 students are enrolled. This process continues with a new cutoff at every multiple of 40. However, because compliance with the rule is not perfect, the setting is fuzzy. 

\begin{sidewaystable}
\centering
\addtolength{\tabcolsep}{-2pt} 
\small
\begin{threeparttable}
\caption{Class size effects on test scores} 
\vspace{-0.5em}
\centering 
\begin{tabular}{c c c c c c c c c c} 
\hline
\multicolumn{10}{c}{(a) Verbal test scores} \\
\hline
\hline  
$h$ & $n_h$ & $\hat{\tau}_{1}$ & $\hat{\tau}_{\Lambda(1)}$ & $\hat{\tau}_{\Lambda(4)}$ & $STD$ & $BC$ & $\mcC_{\Lambda(1)}$ & $\mcC_{\Lambda(4)}$ & $AR_2$ \T \\ 
\hline
6 & 149 & -0.12 & -0.10 & -0.07 & [-0.39, 0.14] & [-0.79, 0.32] & [-0.23, 0.03] & [-0.15, 0.01] & $(-\infty, \infty)$ \\
8 & 229 & -0.10 & -0.09 & -0.08 & [-0.23, 0.02] & [-0.38, 0.09] & [-0.16, -0.01] & [-0.14, -0.02] & [-0.78, 0.10] \\
10 & 295 & -0.08 & -0.06 & -0.06 & [-0.16, -0.00] & [-0.27, 0.00] & [-0.11, -0.01] & [-0.10, -0.01] & [-0.26, -0.02] \\
12 & 379 & -0.07 & -0.05 & -0.05 & [-0.12, -0.01] & [-0.21, -0.02] & [-0.08, -0.01] & [-0.08, -0.01] & [-0.26, 0.06] \\
14 & 445 & -0.06 & -0.05 & -0.05 & [-0.10, -0.02] & [-0.17, -0.03] & [-0.08, -0.02] & [-0.08, -0.02] & [-0.14, 0.02] \\
16 & 527 & -0.05 & -0.03 & -0.03 & [-0.08, -0.01] & [-0.14, -0.03] & [-0.05, -0.01] & [-0.05, -0.01] & [-0.22, 0.10] \\
18 & 609 & -0.04 & -0.03 & -0.03 & [-0.07, -0.01] & [-0.12, -0.03] & [-0.05, -0.01] & [-0.05, -0.01] & [-0.10, 0.02] \\
\hline
\multicolumn{10}{c}{(b) Mathematics test scores} \\
\hline
\hline
$h$ & $n_h$ & $\hat{\tau}_{1}$ & $\hat{\tau}_{\Lambda(1)}$ & $\hat{\tau}_{\Lambda(4)}$ & $STD$ & $BC$ & $\mcC_{\Lambda(1)}$ & $\mcC_{\Lambda(4)}$ & $AR_2$ \T \\
\hline
6 & 149 & -0.10 & -0.08 & -0.05 & [-0.32, 0.12] & [-0.62, 0.34] & [-0.20, 0.04] & [-0.12, 0.02] & $(-\infty, \infty)$ \\
8 & 229 & -0.09 & -0.07 & -0.06 & [-0.20, 0.02] & [-0.31, 0.10] & [-0.15, 0.00] & [-0.13, -0.00] & [-0.74, 0.14] \\
10 & 295 & -0.07 & -0.05 & -0.05 & [-0.14, 0.01] & [-0.22, 0.01] & [-0.10, 0.00] & [-0.09, 0.00] & [-0.22, -0.02] \\
12 & 379 & -0.05 & -0.03 & -0.03 & [-0.11, 0.00] & [-0.18, -0.02] & [-0.07, 0.01] & [-0.07, 0.01] & [-0.22, 0.06] \\
14 & 445 & -0.04 & -0.03 & -0.03 & [-0.09, 0.00] & [-0.15, -0.01] & [-0.07, 0.00] & [-0.07, 0.00] & [-0.14, 0.02] \\
16 & 527 & -0.03 & -0.02 & -0.02 & [-0.07, 0.00] & [-0.13, -0.02] & [-0.05, 0.01] & [-0.05, 0.01] & [-0.14, 0.06] \\
18 & 609 & -0.03 & -0.02 & -0.02 & [-0.06, 0.00] & [-0.11, -0.01] & [-0.04, 0.01] & [-0.04, 0.01] & [-0.10, 0.02] \\
\hline
\hline
\end{tabular}
\caption*{The MSE- and coverage optimal bandwidths are $h_{IK} = 9.75$ and $h_{CCF} = 6.66$ for verbal test scores, and $h_{IK} = 10.83$ and $h_{CCF} = 7.39$ for mathematics test scores}
\label{table empirical verbal}
\end{threeparttable}
\end{sidewaystable}

The outcome of interest is the average score for verbal and mathematics tests in $4^{th}$ grade classes, and the treatment of interest is the class size. The running variable is the total $4^{th}$ grade cohort enrollment for each school, and the number of children considered disadvantaged in each class is included as a control. I use the cutoff at 40 and consider $h\in\{6,8,10,12,14,16,18\}$. For reference, the MSE- and coverage optimal bandwidths of \cite{imbens2012optimal} and \cite{calonico2020optimal} are reported in the table notes. Results for the verbal and mathematics test scores are presented in Panels (a) and (b) of Table \ref{table empirical verbal} respectively.\footnote{While the theoretical framework developed in this paper assumes a binary treatment, the $\lambda$-class remains valid with non-binary treatments, as in this application. The target parameter of interest is identified by the ratio of left- and right-limit discontinuities of conditional mean functions $\mbE[Y|X]$ and $\mbE[D|X]$, and does not require $D$ to be binary. The support of class size here is also rich enough to approximate a continuous random variable. This dataset is widely used for empirical applications in papers studying FRD designs \citep[e.g.,][]{otsu2015empirical, feir2016weak, arai2022testing}.}

For estimation, I consider three estimators: the standard FRD estimator $\hat{\tau}_{1}$, and the new $\lambda$-class estimators $\hat{\tau}_{\Lambda(1)}$ and $\hat{\tau}_{\Lambda(4)}$, where the subscript for $p = 1$ is again dropped. I further report the standard and robust bias-corrected confidence intervals of \citet{calonico2014robust} for $\hat{\tau_1}$, denoted $STD$ and $BC$ respectively, the confidence intervals in \eqref{eq confidence interval} estimated with $\psi = 1$ and $\psi = 4$ with robust errors, and the $AR_2$ test of \citet{noack2024bias}. I also give the bandwidth $h$ and overall effective sample size used for each specification. 

The point estimates from the $\lambda$-class estimators in Table \ref{table empirical verbal} are highly stable, particularly for $\hat{\tau}_{\Lambda(4)}$, with all estimates in [-0.08, -0.02]. These estimates also exhibit low variation across different bandwidths, which is of empirical relevance since researchers often report estimates with a range of bandwidths as a sensitivity check. The other estimators have greater variation in the point estimates, particularly the bias-corrected estimator. The confidence intervals $\mcC_{\Lambda(1)}$ and $\mcC_{\Lambda(4)}$ are also stable across specifications, and in general provide similar confidence regions. Potential weak identification is suggested for $h = 6$, as $AR_2$ is unbounded, and the $BC$ confidence intervals are wide. The new estimators give point estimates that remain stable across bandwidths relative to the other estimators, strengthening the robustness of the results and interpretations, and also produce tight confidence intervals; these findings align with both the theory and simulations of this paper.

\section{Conclusion}

In this paper, I show that the standard FRD estimator does not have finite integer moments in finite samples, leading to poor finite-sample performance. I present a new $\lambda$-class of estimators which preserves all moments in the data. This class is computationally simple and delivers significant improvements over existing estimators, particularly with small samples or a small treatment probability discontinuity. The choice $\lambda = \Lambda(4)$ provides especially good performance. I also show that simple confidence intervals typically have good coverage in small samples.

This work leads to many potential areas for future research. Of particular interest is the development of bias-corrected confidence intervals using the $\lambda$-class estimators. Bias-corrected confidence intervals use the square of the standard estimator denominator; as the $\lambda$-class and standard FRD estimators have the same asymptotic bias up to scaling, it may be possible to develop confidence intervals to use the more stable $\lambda$-class estimators, and in particular the denominator, for bias estimation. Another important avenue for future research is to consider theoretical $\lambda$-optimality results, or to develop finite-sample bandwidth selection algorithms that can specifically exploit the small-sample properties of the $\lambda$-class. 

\bibliographystyle{apalike}
\bibliography{references}

\appendix

\section{Proofs}\label{sec proofs}

\begin{proof}[Proof of Theorem \ref{thm no moments}]
Denominator $\hat{\tau}_p^D = \hat{\pi}_{p,+}(x_0) - \hat{\pi}_{p,-}(x_0)$ is an almost surely $C_b^1$ function of the absolutely continuous $X$ by Assumption \ref{as fx} and Lemma \ref{lem differentiability}, and has positive gradient almost everywhere by Assumption \ref{as gradient}. Then, it follows from the coarea formula \citep{Federer1969} that $\hat{\tau}_p^D$ admits an absolutely continuous marginal distribution with density $f_{\hat{\tau}_p^D}(t_d)$; see Lemma \ref{lem bounded away} for details, which uses this argument to explicitly construct $f_{\hat{\tau}_p^D}(0)$ and show that there exist constants $\delta_d > 0$ and $C_{f,\tau}^* > 0$ such that $f_{\hat{\tau}_p^D}(t_d) \geq C_{f,\tau}^* / 4 > 0$ for each $|t_d| < \delta_d$. However, the argument is general for defining $f_{\hat{\tau}_p^D}(t_d)$ for $t_d \in \mbR$, and therefore $f_{\hat{\tau}_p^D}(t_d)$ is a well-defined density.

Now consider $\hat{\mu}_{p,+}(x_0)$. From \eqref{eq main model} and \eqref{eq main estimators}, and noting that $\sum_{i \in \mcN_+} K_{h}(X_i - x_0)\omega_{p,+,i} = 1$ by Assumptions \ref{as kernel} and \ref{as full rank}, then
\begin{align*}
    \hat{\mu}_{p,+}(x_0) = \displaystyle\sum_{i\in \mcN_+}Y_i K_{h}(X_i - x_0)\omega_{p,+,i}
    &= \displaystyle\sum_{i\in \mcN_+}(m(X_i) + \tau D_i + U_i) K_{h}(X_i - x_0)\omega_{p,+,i} \\
    &\equiv \tilde{M}_+ + \tau\hat{\pi}_{p,+}(x_0) + \tilde{U}_+. 
\end{align*}
The same expansion on the left side of the cutoff gives $\hat{\mu}_{p,-}(x_0) = \tilde{M}_- + \tau\hat{\pi}_{p,-}(x_0) + \tilde{U}_-$. Therefore,
$\hat{\tau}_p^Y = \mu_{\hat{\tau}_p^Y} + \tilde{U}_+ - \tilde{U}_-$,
where $\mu_{\hat{\tau}_p^Y} := (\tilde{M}_+ - \tilde{M}_-) + \tau \hat{\tau}_p^D$ is constant conditional on $(X,D)$. As $m(\cdot)$ is continuous and $\mcT$ is compact, denote
\begin{align}\label{eq Cm Ct bounds}
    C_M = \sup_{x \in \mcX_h} |m(x)| < \infty, \,\,\, C_{\mcT} = \sup_{t \in \mcT} |t| < \infty.
\end{align}
As $|\hat{\tau}_p^D|\leq C_{\hat{\tau}_p^D} < \infty$ by Lemma \ref{lem gradient upper}, then $|\mu_{\hat{\tau}_p^Y}| \leq 2C_M + C_{\mcT}C_{\hat{\tau}_p^D} := C_{\mu} < \infty$. Further, $\mbE[|K_h(X_i-x_0)\omega_{p,+,i}U_i|^r] \leq (C_KC_{\omega}/h)^r\sup_{x \in \mcX}\mbE[|U_i|^r|X_i = x] <\infty$ for all $r < r^*$ for each $i \in \mcN_+$ by Assumptions \ref{as fx}(iii) and \ref{as kernel}, and Lemma \ref{lem gradient upper}, and there holds an equivalent bound for $i \in \mcN_-$, so $\mbE[|\hat{\tau}_p^Y|^r] = \mbE[\mbE[|\hat{\tau}_p^Y|^{r}|X,D]] < \infty$ for all $r < r^*$.

Fix $i^* \in \mcN_+$ such that $|k_{+,i^*}| \geq 1/n$ for $k_{+,i} \equiv K_h(X_i - x_0)\omega_{p,+,i}$ (where such $i^*$ exists as $\sum_{i \in \mcN_+}k_{+,i}$ = 1), and write $\hat{\tau}_p^Y = k_{+,i^*}U_{i^*} + V$, where $V$ is fixed conditional on $(X,D,U_{-i^*})$. Since $U_{i^*}$ is independent of $U_{-i^*}$ conditional on $X$ by Assumption \ref{as fx}(iii), then
\begin{align}\label{eq convolution}
    f_{\hat{\tau}_p^Y|X,D,U_{-i^*}}(t_y|X,D,U_{-i^*}) = \frac{1}{|k_{+,i^*}|}f_{U|X}\left(\frac{t_y - V}{k_{+,i^*}}\bigg| X_{i^*}\right).
\end{align}
Taking expectations over $U_{-i^*}$ yields
\begin{align}\label{eq integrate over Uj}
    f_{\hat{\tau}_p^Y|X,D}(t_y|X,D) = \mbE[f_{\hat{\tau}_p^Y|X,D,U_{-i^*}}(t_y|X,D,U_{-i^*})|X,D].
\end{align}
Restrict the expectation in \eqref{eq integrate over Uj} to $|W| < \delta_w$ for $W := V - \mu_{\hat{\tau}_p^Y}$ and some $\delta_w > 0$. Since $|k_{+,i^*}| \geq 1/n$, then for any $|t_y| \leq \delta_y$, $|\mu_{\hat{\tau}_p^Y}| \leq C_{\mu}$, and $|W| < \delta_w$, the argument of $f_{U|X}(\cdot|\cdot)$ lies within the compact set $[-\delta_u,\delta_u]$ for $\delta_u = n(\delta_y + C_{\mu} + \delta_w)$. By Assumption \ref{as fx}(iii)(b), then $C_u(\delta_u) := \inf_{[-\delta_u,\delta_u] \times \mcX} f_{U|X}(u|x) > 0$. Given also that $|k_{+,i^*}| \leq C_KC_{\omega}/h$ by Lemma \ref{lem gradient upper}, and $\ve_{w} := \inf_{X} \mbP\{|W| < \delta_w | X\} > 0$, which exists as $W$ is a finite weighted sum of $U_j$'s satisfying Assumption \ref{as fx}(iii), then for $|t_y| < \delta_y$, \eqref{eq convolution} gives
\begin{align}\label{eq conddensity bound}
    f_{\hat{\tau}_p^Y|X,D}(t_y|X,D) \geq \frac{C_u(\delta_u)}{|k_{+,i^*}|} \mbP\{|W| \leq \delta_w|X\} \geq \frac{hC_u(\delta_u)}{C_KC_{\omega}}\ve_{w} := \eta_y > 0
\end{align}
uniformly over admissible $(X,D)$ and $|t_y| < \delta_y$. Then,
\begin{align*}
    f_{\hat{\tau}_p^Y|\hat{\tau}_p^D}(t_y|t_d) = \mbE\left[f_{\hat{\tau}_p^Y|X,D}(t_y|X,D)| \hat{\tau}_p^D = t_d\right] \geq \eta_y > 0,
\end{align*}
for all $|t_y| < \delta_y$ and $|t_d| < \delta_d$. By Lemma \ref{lem bounded away}, then $f_{\hat{\tau}_p^D}(t_d) \geq C_{f,\tau}^*/4 > 0$ for $|t_d| < \delta_d$. So, for the $r^{th}$ moment with $r\geq 1$, 
\begin{align*}
    \mbE[|\hat{\tau}_p|^r]  &= \int_{-\infty}^{\infty}\int_{-\infty}^{\infty}\left|\frac{t_y}{t_d} \right|^r f_{\hat{\tau}_p^Y|\hat{\tau}_p^D}(t_y|t_d)f_{\hat{\tau}_p^D}(t_d) dt_y dt_d \\
    &\geq \, \int_{-\delta_d}^{\delta_d}\int_{-\delta_y}^{\delta_y}\left|\frac{t_y}{t_d} \right|^r f_{\hat{\tau}_p^Y|\hat{\tau}_p^D}(t_y|t_d)f_{\hat{\tau}_p^D}(t_d) dt_y dt_d \\
    &\geq  2\frac{\eta_y\delta_y^{r+1}}{r+1} \int_{-\delta_d}^{\delta_d}\frac{1}{|t_d|^r} f_{\hat{\tau}_p^D}(t_d) dt_d \geq \frac{\eta_y\delta_y^{r+1}}{r+1} C_{f,\tau}^* \int_0^{\delta_d}t_d^{-r}dt_d = \infty,
\end{align*}
as $\int_0^{a}v^{-r}dv$ diverges for any $a > 0$ if $r\geq 1$. 

Now consider $r \in [r^*,1)$ (which is vacuous if $r^* > 1$). As $\sup_{x \in \mcX}\mbE[|U_i|^r|X_i=x] = \infty$ and $|k_{+,i^*}| \geq 1/n >0$, consider $V = \mu_{\hat{\tau}_p^Y} + W$, which is independent of $\tilde{U}_{i^*}$ conditional on $X$. As $|z| > 2|b|/|a|$ implies $|az + b| \geq |a||z|/2$ for $a > 0$, then $\hat{\tau}_p^Y = k_{+,i^*}U_{i^*} + V$ gives
\begin{align}\label{eq truncated expectation}
    \mbE[|\hat{\tau}_p^Y|^r|X,D] \geq \left(\frac{|k_{+,i^*}|}{2}\right)^r \mbE\left[|U_{i^*}|^r1\left\{|U_{i^*}| > 2 \frac{|V|}{|k_{+,i^*}|}\right\}\bigg|X\right] = \infty,
\end{align}
as finite truncations of diverging integrals diverge. Since $2|V|/|k_{+,i^*}| < \infty$ $a.s.$, and $\tilde{U}_{i^*}$ and $V$ are independent conditional on $X$, then $\mbE[|\hat{\tau}_p|^r] = \mbE\left[|\hat{\tau}_p^D|^{-r} \cdot \mbE[|\hat{\tau}_p^Y|^r|X,D]\right] \geq C_{\hat{\tau}_p^D}^{-r}\,\mbE[|\hat{\tau}_p^Y|^r]=\infty$ by \eqref{eq truncated expectation} and iterated expectations.

Finally, consider $r < \min\{r^*,1\}$. Then,
\begin{align*}
    \mbE\left[|\hat{\tau}_p^D|^{-r}\right] \leq \int_{-1}^{1}|t_d|^{-r}f_{\hat{\tau}_p^D}(t_d)dt_d + \int_{|t_d| > 1}|t_d|^{-r}f_{\hat{\tau}_p^D}(t_d)dt_d < \infty,
\end{align*}
where the first integral converges since $r <1$, and the second converges as $|\hat{\tau}_p^D| \leq C_{\hat{\tau}_p^D} < \infty$ deterministically by Lemma \ref{lem gradient upper}. Further,
\begin{align*}
    \mbE[|\hat{\tau}_p^Y|^r|X,D] \leq C_{\mu}^r + n(C_KC_\omega/h)^r \sup_{x \in \mcX} \mbE[|U_i|^r|X_i=x] := C_Y < \infty
\end{align*}
uniformly over $(X,D)$. Therefore, $\mbE[|\hat{\tau}_p|^r]  = \mbE[|\hat{\tau}_p^D|^{-r}\, \mbE[|\hat{\tau}_p^Y|^r|X,D]] \leq C_{Y}\,\mbE[|\hat{\tau}_p^D|^{-r}]< \infty$ if and only if $r < \min\{r^*,1\}$. \QEDB
\end{proof}

\begin{proof}[Proof of Proposition \ref{prop equivalence iv}]
For this proof, I demonstrate that there exists some common scaling factor $C_{\tau}\neq 0$ such that $\hat{\tau}_{IV,p}^Y = C_{\tau}\hat{\tau}_{p}^Y$ and $\hat{\tau}_{IV,p}^D = C_{\tau}\hat{\tau}_{p}^D$, and so $\hat{\tau}_{IV,p}$ is numerically equivalent to $\hat{\tau}_{p}$. Consider the denominator $\hat{\tau}_{IV,p}^D = \tilde{Z}'M_{\tilde{V_p}}\tilde{D} = Z'KD - Z'KV_p(V_p'KV_p)^{-1}V_p'KD$.
Expand $V_p'KV_p$ as
\begin{align*}
    V_p'KV_p =     
    \begin{pmatrix}
        S_0 & \mcR_{p,+}' & \mcR_{p,-}' \\
        \mcR_{p,+} & \mcU_{p,+} & 0_{p\times p} \\
        \mcR_{p,-} & 0_{p \times p} & \mcU_{p,-} \\
    \end{pmatrix},
\end{align*}
where $\mcR_{p,+}$ and $\mcR_{p,-}$ are $p$-vectors with $j^{th}$ elements $S_{j,+}$ and $S_{j,-}$ respectively (recall that $S_{\ell,+} = \sum_{i\in \mcN_+}K_h(X_i - x_0)(X_i - x_0)^\ell$, replace $i \in \mcN_+$ with $i \in \mcN_-$ for the definition of $S_{\ell,-}$), $\mcU_{p,+}$ and $\mcU_{p,-}$ are $p\times p$ matrices with $ij^{th}$ elements $S_{i+j,+}$ and $S_{i+j,-}$ respectively, and $0_{p \times p}$ is the $p \times p$ matrix of zeros. By iterating the formula for the partitioned inverse of block matrices, then $(V_p'KV_p)^{-1}$ is the matrix
\begin{align*}
    \begin{pmatrix}
        \Delta_p^{-1} & 
        -\Delta_p^{-1} \mcR_{p,+}'\mcU_{p,+}^{-1} &
        -\Delta_p^{-1}\mcR_{p,-}'\mcU_{p,-}^{-1} \\
        -\mcU_{p,+}^{-1}\mcR_{p,+}\Delta_p^{-1} & \mcU_{p,+}^{-1}+\mcU_{p,+}^{-1}\mcR_{p,+}\Delta_p^{-1}\mcR_{p,+}'\mcU_{p,+}^{-1} & \mcU_{p,+}^{-1}\mcR_{p,+}\Delta_p^{-1}\mcR_{p,-}'\mcU_{p,-}^{-1} \\
        -\mcU_{p,-}^{-1}\mcR_{p,-}\Delta_p^{-1} & \mcU_{p,-}^{-1}\mcR_{p,-}\Delta_p^{-1}\mcR_{p,+}'\mcU_{p,+}^{-1} &\mcU_{p,-}^{-1}+\mcU_{p,-}^{-1}\mcR_{p,-}\Delta_p^{-1}\mcR_{p,-}'\mcU_{p,-}^{-1}
    \end{pmatrix},
\end{align*}
for $\Delta_p = S_0 - \mcR_{p,+}'\mcU_{p,+}^{-1}\mcR_{p,+} - \mcR_{p,-}'\mcU_{p,-}^{-1}\mcR_{p,-}$ (as principal submatrices of the positive definite matrices $\mcS_{p,+}$ and $\mcS_{p,-}$, $\mcU_{p,+}$ and $\mcU_{p,-}$ are both also positive definite and therefore invertible). Express $Z'KV_p$ as $Z'KV_p = \begin{pmatrix} S_{0,+} & \mcR_{p,+}' & 0_p \end{pmatrix}$,
where $0_p$ is the $p$-vectors of zeros, and denote $\Upsilon_{p,+} = \mcR_{p,+}'\mcU_{p,+}^{-1}\mcR_{p,+}$. Then,
\begin{align*}
     (V_p'KV_p)^{-1}V_pKZ = 
    \begin{pmatrix}
        S_{0,+} \Delta_p^{-1} - \Delta_p^{-1}\Upsilon_{p,+} \\
        -S_{0,+} \mcU_{p,+}^{-1}\mcR_{p,+}\Delta_p^{-1} + \mcU_{p,+}^{-1}\mcR_{p,+} +   \mcU_{p,+}^{-1}\mcR_{p,+} \Delta_p^{-1}\Upsilon_{p,+} \\
        -S_{0,+} \mcU_{p,-}^{-1}\mcR_{p,-}\Delta_p^{-1} + \mcU_{p,-}^{-1}\mcR_{p,-}\Delta_p^{-1}\Upsilon_{p,+} 
    \end{pmatrix}.
\end{align*}
Further, $(V_p'KD)' = \begin{pmatrix}\sum_{i \in \mcN_h} D_iK_h(X_i - x_0) & \mcG_{p,+}' & \mcG_{p,-}'\end{pmatrix}$, where $\mcG_{p,+}$ and $\mcG_{p,-}$ are the $p$-vectors with $j^{th}$ elements $\sum_{i\in \mcN_+}D_iK_h(X_i - x_0)(X_i - x_0)^j$ and $\sum_{i\in \mcN_-}D_iK_h(X_i - x_0)(X_i - x_0)^j$ respectively. Then,
\begin{align}\label{eq ZMvpD main}
    \hat{\tau}_{IV,p}^D =& \displaystyle \sum_{i\in \mcN_+}D_iK_h(X_i - x_0)
    - \left(S_{0,+} \Delta_p^{-1} - \Upsilon_{p,+}\Delta_p^{-1}\right) \displaystyle \sum_{i\in \mcN_h}D_iK_h(X_i - x_0) \notag \\
    &+ S_{0,+} \Delta_p^{-1}\mcR_{p,+}'\mcU_{p,+}^{-1}\mcG_{p,+} - \mcR_{p,+}'\mcU_{p,+}^{-1}\mcG_{p,+} - \Upsilon_{p,+} \Delta_p^{-1} \mcR_{p,+}'\mcU_{p,+}^{-1}\mcG_{p,+} \notag \\
    &+ S_{0,+} \Delta_p^{-1}\mcR_{p,-}'\mcU_{p,-}^{-1}\mcG_{p,-} - \Upsilon_{p,+} \Delta_p^{-1} \mcR_{p,-}'\mcU_{p,-}^{-1}\mcG_{p,-}.
\end{align}
Consider the denominator $\hat{\tau}_p^D = e_1'\mcS_{p,+}^{-1}H_{p,+}'K_+D_+ - e_1'\mcS_{p,-}^{-1}H_{p,-}'K_-D_-$ of $\hat{\tau}_p$, where
\begin{align*}
    \mcS_{p,+} = 
    \begin{pmatrix}
        S_{0,+} & \mcR_{p,+}' \\
        \mcR_{p,+} & \mcU_{p,+} \\
    \end{pmatrix}, \,\,\,\,
    H_{p,+}'K_+D_+ = 
    \begin{pmatrix}
        \displaystyle \sum_{i \in \mcN_+} D_iK_h(X_i - x_0) \\
        \mcG_{p,+}
    \end{pmatrix}.
\end{align*}
Denote $\Gamma_{p,+} = S_{0,+} - \Upsilon_{p,+}$, then standard matrix algebra yields
\begin{align*}
    e_1'\mcS_{p,+}^{-1}H_{p,+}'K_+D_+ &= \Gamma_{p,+}^{-1}\displaystyle \sum_{i \in \mcN_+} D_iK_h(X_i - x_0) -\Gamma_{p,+}^{-1} \mcR_{p,+}'\mcU_{p,+}^{-1}\mcG_{p,+} = \Gamma_{p,+}^{-1}\Theta_{p,+},
\end{align*}
 for $\Theta_{p,+} = \sum_{i \in \mcN_+} D_iK_h(X_i - x_0) - \mcR_{p,+}'\mcU_{p,+}^{-1}\mcG_{p,+}$. With the analogous definitions for $\mcS_{p,-}^{-1}$, $H_{p,-}'K_-D_-$, $\Gamma_{p,-}$, and $\Theta_{p,-}$, then
\begin{align}\label{eq tauDp den main}
    \hat{\tau}_p^D =& \,  \Gamma_{p,+}^{-1}\Theta_{p,+} - \Gamma_{p,-}^{-1}\Theta_{p,-}. 
\end{align}
As $\Delta_p \equiv \Gamma_{p,+} + \Gamma_{p,-}$, then \eqref{eq ZMvpD main} can be re-written as
\begin{align}\label{eq tauivden gammatheta}
    \hat{\tau}_{IV,p}^D =& \displaystyle \sum_{i\in \mcN_+}D_iK_h(X_i - x_0) - \frac{\Gamma_{p,+}}{\Gamma_{p,+} + \Gamma_{p,-}}\displaystyle \sum_{i\in \mcN_h}D_iK_h(X_i - x_0) \notag \\
    &+ \frac{\Gamma_{p,+}}{\Gamma_{p,+} + \Gamma_{p,-}} \mcR_{p,+}'\mcU_{p,+}^{-1}\mcG_{p,+} + \frac{\Gamma_{p,+}}{\Gamma_{p,+} + \Gamma_{p,-}}\mcR_{p,-}'\mcU_{p,-}^{-1}\mcG_{p,-} - \mcR_{p,+}'\mcU_{p,+}^{-1}\mcG_{p,+} \notag \\
    =& \,\frac{\Gamma_{p,-}}{\Gamma_{p,+} + \Gamma_{p,-}} \Theta_{p,+}
    - \frac{\Gamma_{p,+}}{\Gamma_{p,+} + \Gamma_{p,-}} \Theta_{p,-},
\end{align}
where the second equality uses $1 - a/(a+b) = b/(a+b)$ for scalar $a,b$. Then, using $\tilde{\Gamma}_p = \Gamma_{p,+}\Gamma_{p,-}/(\Gamma_{p,+} + \Gamma_{p,-})$ and \eqref{eq tauDp den main}, \eqref{eq tauivden gammatheta} can be expressed as
\begin{align}\label{eq equiv den}
    \hat{\tau}_{IV,p}^D = \frac{\Gamma_{p,+}\Gamma_{p,-}}{\Gamma_{p,+} + \Gamma_{p,-}} \left(\Gamma_{p,+}^{-1}\Theta_{p,+} - \Gamma_{p,-}^{-1}\Theta_{p,-} \right)= \tilde{\Gamma}_p\hat{\tau}_p^D.
\end{align} 

As $(V_p'KY)' = \begin{pmatrix} \sum_{i \in \mcN_h} Y_iK_h(X_i - x_0) & \mcY_{p,+}' & \mcY_{p,-}'\end{pmatrix}$, where $\mcY_{p,+}$ and $\mcY_{p,-}$ are the $p$-vectors with $j^{th}$ elements $\sum_{i\in \mcN_+}Y_iK_h(X_i - x_0)(X_i - x_0)^j$ and $\sum_{i\in \mcN_-}Y_iK_h(X_i - x_0)(X_i - x_0)^j$ respectively, and using $Z'KY = \sum_{i\in \mcN_+}Y_iK_h(X_i - x_0)$, the steps to arrive at \eqref{eq equiv den} also show that $\hat{\tau}_{IV,p}^Y = \tilde{\Gamma}_p \hat{\tau}_p^Y$, so
\begin{align*}
    \hat{\tau}_{IV,p} = \frac{\hat{\tau}_{IV,p}^Y}{\hat{\tau}_{IV,p}^D} = \frac{\tilde{\Gamma}_p \hat{\tau}_p^Y}{\tilde{\Gamma}_p \hat{\tau}_p^D} = \frac{ \hat{\tau}_p^Y}{ \hat{\tau}_p^D} = \hat{\tau}_{p}. \tag*{\QEDB}
\end{align*}
\end{proof}

\begin{proof}[Proof of Proposition \ref{prop equivalence 2sls}]
Consider the denominator of $\hat{\tau}_{1,p}$, given by
\begin{align*}
    \hat{\tau}_{1,p}^D = \tilde{D}'M_{\tilde{V}_p} P_{M_{\tilde{V}_p}\tilde{Z}}M_{\tilde{V}_p}\tilde{D} = \tilde{D}'M_{\tilde{V}_p}\tilde{Z}(\tilde{Z}'M_{\tilde{V_p}}\tilde{Z})^{-1}\tilde{Z}'M_{\tilde{V_p}}\tilde{D} = (\tilde{Z}'M_{\tilde{V_p}}\tilde{Z})^{-1}(\hat{\tau}_{IV,p}^D)^2.
\end{align*}
The expression for $\tilde{Z}'M_{\tilde{V_p}}\tilde{D} \equiv \hat{\tau}_{IV,p}^D$ in terms of $\hat{\tau}_p^D$ is already given in \eqref{eq equiv den}, so the only part to focus on is $\tilde{Z}'M_{\tilde{V_p}}\tilde{Z} = Z'KZ - Z'KV_p(V_p'KV_p)^{-1}V_p'KZ$. Then,
\begin{align*}
    \tilde{Z}'P_{\tilde{V}_p}\tilde{Z} &= \, S_{0,+}^2 \Delta_p^{-1} - S_{0,+}\Upsilon_{p,+}\Delta_p^{-1} -S_{0,+} \Delta_p^{-1}\Upsilon_{p,+}
    + \Upsilon_{p,+} + \Upsilon_{p,+} \Delta_p^{-1} \Upsilon_{p,+} \\
    &= \, S_{0,+} \Delta_p^{-1}[S_{0,+} - \Upsilon_{p,+}]
    - [S_{0,+} - \Upsilon_{p,+}] \Delta_p^{-1}\Upsilon_{p,+}
    + \Upsilon_{p,+}.
\end{align*}
As $\Gamma_{p,+} = S_{0,+} - \Upsilon_{p,+}$ and $\Delta_p = \Gamma_{p,+} + \Gamma_{p,-}$, then this becomes
\begin{align*}
    \tilde{Z}'P_{\tilde{V}_p}\tilde{Z} = S_{0,+} \Gamma_{p,+} \Delta_p^{-1}
    - \Gamma_{p,+} \Delta_p^{-1}\Upsilon_{p,+}
    + \Upsilon_{p,+} = \frac{\Gamma_{p,+}^2}{\Gamma_{p,+} + \Gamma_{p,-}}
    + \Upsilon_{p,+}.
\end{align*}
Given $\tilde{Z}'\tilde{Z} =  \sum_{i \in \mcN_h} Z_iK_h(X_i - x_0) = \sum_{i \in \mcN_+} K_h(X_i - x_0) = S_{0,+}$,
it follows that
\begin{align}\label{eq ZMZ final}
    \tilde{Z}'M_{\tilde{V}_p}\tilde{Z} = \, S_{0,+} - \left(\frac{\Gamma_{p,+}^2}{\Gamma_{p,+} + \Gamma_{p,-}} + \Upsilon_{p,+}\right) = \Gamma_{p,+}  - \frac{\Gamma_{p,+}^2}{\Gamma_{p,+} + \Gamma_{p,-}} = \tilde{\Gamma}_p.
\end{align}
So, combining \eqref{eq ZMZ final} with $\tilde{Z}'M_{\tilde{V}_p}\tilde{D}\equiv \hat{\tau}_{IV,p}^D$ given in \eqref{eq equiv den}, then
\begin{align}\label{eq equiv den 2sls main}
    \hat{\tau}_{1,p}^D = \tilde{\Gamma}_p^{-1}\left(\tilde{\Gamma}_p \hat{\tau}_p^D \right)^2 = \tilde{\Gamma}_p(\hat{\tau}_p^D)^2.
\end{align}
$\hat{\tau}_{1,p}^Y$ can be written as $\hat{\tau}_{1,p}^Y = \hat{\tau}_{IV,p}^D (\tilde{Z}'M_{\tilde{V_p}}\tilde{Z})^{-1}\hat{\tau}_{IV,p}^Y$, so combining \eqref{eq equiv den}-\eqref{eq ZMZ final} yields
\begin{align}\label{eq equiv num 2sls main}
    \hat{\tau}_{1,p}^Y = \tilde{\Gamma}_p\hat{\tau}_p^Y
    \hat{\tau}_p^D.
\end{align}
The result follows from \eqref{eq equiv den 2sls main} and \eqref{eq equiv num 2sls main}. \QEDB
\end{proof}

\begin{proof}[Proof of Theorem \ref{thm lambda moments}]
Substitute $\tilde{Y} = \tilde{V}_p\delta + \tau\tilde{D} + \tilde{U}$ into \eqref{eq lambda class estimators}, note that $M_{\tilde{V}_p}\tilde{V}_p = 0$, and define $\Phi := M_{\tilde{V}_p}(I_{n_h} - \lambda M_{M_{\tilde{V}_p}\tilde{Z}})M_{\tilde{V}_p}\tilde{D}$. Then, $\hat{\tau}_{\lambda,p} - \tau = (\Phi'\tilde{D})^{-1}\Phi'\tilde{U}$. Since $\|M_{\tilde{V}_p}\|_{op}\leq 1$ and $\|I_{n_h} - \lambda M_{M_{\tilde{V}_p}\tilde{Z}}\|_{op} \leq 1$, for $\| \cdot \|_{op}$ the operator norm, it follows that $\|\Phi\|_2 \leq \| \tilde{D} \| \leq \sqrt{nC_K/h}$. Let $\vp_i = \Phi_i K^{1/2}_h(X_i - x_0)$, and so $|\vp_i| \leq C_K/h := C_{\vp} < \infty$. Also note that $\Phi'\tilde{U} = \sum_{i\in\mcN_h} \vp_i U_i$. 

Fix $r < r^*$. By Lemma \ref{lem positivity}, there exists $C_\zeta^*>0$ such that $\hat{\tau}_{\lambda,p}^D \equiv \Phi'\tilde{D} \geq (1-\lambda)C_\zeta^* > 0$, and so
\begin{align}\label{eq taulambda minus lambda}
    \mbE[|\hat{\tau}_{\lambda,p} - \tau|^r] &\leq [(1-\lambda)C_\zeta^*]^{-r}\mbE\left[\bigg|\sum_{i\in\mcN_h} \vp_i U_i\bigg|^r\right] \notag \\ 
    &\leq n^{\max\{r,1\}}C_{\vp}^r \sup_{x \in \mcX}\mbE[|U_i|^r|X_i=x ] < \infty
\end{align}
by Assumption \ref{as fx}(iii) and since $|\vp_i| \leq C_{\vp}$. As $\tau \in \mcT$ for compact $\mcT$, $\tau$ is finite, and so \eqref{eq taulambda minus lambda} and the triangle inequality give $\mbE[|\hat{\tau}_{\lambda,p}|^r] < \infty$ for all $r < r^*$.

Now fix $r \geq r^*$. Since $\Phi'\tilde{D} \geq (1-\lambda)C_\zeta^* > 0$, then there exists $i^* \in \mcN_h$ such that $\vp_{i^*} \neq 0$. Write $\Phi'\tilde{U} = \vp_{i^*}U_{i^*} + W^{\dagger}$ for $W^{\dagger} = \sum_{j \in \mcN_h \setminus \{i^*\}} \vp_jU_j$. By Assumption \ref{as fx}(iii), $U_{i^*}$ and $W^{\dagger}$ are independent conditional on $X$. By the same argument that gives \eqref{eq truncated expectation}, then $\mbE[|\Phi'\tilde{U}|^r|X,D] = \infty$ for all $r \geq r^*$. This leads to $\mbE[|\hat{\tau}_{\lambda,p} - \tau|^r] = \infty$, and so triangle inequality and $|\tau| < \infty$ imply that $\mbE[|\hat{\tau}_{\lambda,p}|^r] = \infty$ for all $r \geq r^*$. \QEDB
\end{proof}

\begin{proof}[Proof of Theorem \ref{thm lambda consistent}]
The difference of the estimators is given by
\begin{align}\label{eq estimator difference}
    \hat{\tau}_{\lambda,p} - \hat{\tau}_p = \frac{(1-\lambda)(\tilde{D}'M_{\tilde{V}_p}\tilde{Y} - \hat{\tau}_p\tilde{D}'M_{\tilde{V}_p}\tilde{D})}{\lambda\tilde{\Gamma}_p(\hat{\tau}_p^D)^{2} + (1-\lambda)\tilde{D}'M_{\tilde{V}_p}\tilde{D}}.
\end{align}
By Assumption \ref{as asymptotics}(i) and standard local polynomial regression results \citep[e.g.,][]{fan1996local}, then $\hat{\mu}_{p,+}(x_0) \overset{p}{\to} \mu_{+}(x_0)$,  $\hat{\mu}_{p,-}(x_0) \overset{p}{\to} \mu_{-}(x_0)$, $\hat{\pi}_{p,+}(x_0) \overset{p}{\to} \pi_{+}(x_0)$, and $\hat{\pi}_{p,-}(x_0) \overset{p}{\to} \pi_{-}(x_0)$, so $\hat{\tau}_p^Y \overset{p}{\to} \mu_{+}(x_0) - \mu_{-}(x_0) = \tau^Y$ and $\hat{\tau}_p^D \overset{p}{\to} \pi_{+}(x_0) - \pi_{-}(x_0) = \tau^D$. As $\tau^D\neq 0$ by Assumption \ref{as pi}(ii), then $\hat{\tau}_p \to \tau^Y/\tau^D = \tau$ by Slutsky's theorem.

As $M_{\tilde{V}_p}$ is a projection matrix, then $\tilde{D}'M_{\tilde{V}_p}\tilde{D} \leq \tilde{D}'\tilde{D} \leq n C_K/h$, and
\begin{align*}
    \tilde{D}'M_{\tilde{V}_p}\tilde{Y} \leq \left(\sum_{i \in \mcN_h}D_iK_h(X_i-x_0) \right)^{1/2} \left(\sum_{i \in \mcN_h}Y_i^2K_h(X_i-x_0) \right)^{1/2} = O_p(n_h),
\end{align*}
since $K_h(X_i - x_0) \leq C_K/h$ and $\mbE[\|Y_i\|^2] < \infty$ by Assumption \ref{as asymptotics}(iv), as $\mbE[\|\tilde{Y}_i\|^r] < \infty$ for $r < r^*$ by the proof of Theorem \ref{thm lambda moments}, which in turn implies $\mbE[\|Y_i\|^r] < \infty$ for $r < r^*$. This implies that $\tilde{D}'M_{\tilde{V}_p}\tilde{Y} - \hat{\tau}_p\tilde{D}'M_{\tilde{V}_p}\tilde{D} = O_p(n_h)$, since $\hat{\tau}_p = O_p(1)$. Then, $(1-\lambda)(\tilde{D}'M_{\tilde{V}_p}\tilde{Y} - \hat{\tau}_p\tilde{D}'M_{\tilde{V}_p}\tilde{D})/n_h = (1-\lambda)O_p(1)$. By Lemma \ref{lem Gammaplus} and $\lambda \overset{p}{\to} 1$, then
\begin{align*}
    \frac{\lambda\tilde{\Gamma}_p(\hat{\tau}_p^D)^{2}}{n_h} + \frac{(1-\lambda)\tilde{D}'M_{\tilde{V}_p}\tilde{D}}{n_h} \overset{p}{\to} c_{\Gamma,p}(\tau^D)^2 > 0,
\end{align*}
and so $\hat{\tau}_{\lambda,p} - \hat{\tau}_p = O_p(1-\lambda)$ follows. So, $\hat{\tau}_{\lambda,p} - \hat{\tau}_p = o_p(1)$ under Assumptions \ref{as asymptotics}(i)-(ii). If additionally Assumption \ref{as asymptotics}(iii) holds, then $\sqrt{n_h}(\hat{\tau}_{\lambda,p} - \hat{\tau}_p) = o_p(1)$ such that
\begin{align*}
    \sqrt{n_h}(\hat{\tau}_{\lambda,p} - \tau) = \sqrt{n_h}(\hat{\tau}_p - \tau) + o_p(1) = \sqrt{\frac{n_h}{nh}}\sqrt{nh}(\hat{\tau}_p - \tau) + o_p(1).
\end{align*}
Since $\sqrt{nh}(\hat{\tau}_p - \tau) = O_p(1)$ given $n_hh^{2p+3} = O(1)$, by standard boundary local polynomial bias and variance expansions (e.g., Theorem 3.2 of \cite{fan1996local}) and the delta method, and $n_h/nh \overset{p}{\to} 2f_X(x_0)$, then
\begin{align*}
    \sqrt{n_h}(\hat{\tau}_p - \tau) = \sqrt{\frac{n_h}{nh}}\sqrt{nh}(\hat{\tau}_p - \tau) = \sqrt{2f_X(x_0)nh}(\hat{\tau}_p - \tau) + o_p(1). \tag*{\QEDB}
\end{align*}
\end{proof}

\section{Supporting lemmas}\label{sec lemmas}

\begin{defi}\label{def symmetry}
The effective sample $(X,D)$ is symmetric if there exists a permutation $\Sigma: \mcN_+ \to \mcN_-$ such that $X_i - x_0 = -(X_{\Sigma(i)} - x_0)$ and $D_i = D_{\Sigma(i)}$ for each $i\in\mcN_+$, $\Sigma(i)\in\mcN_-$.
\end{defi}

\begin{lem}\label{lem symmetry}
Let Assumptions \ref{as fx}-\ref{as full rank} hold, and consider an effective sample configuration $(X,D)$ that is symmetric. Then, $\hat{\tau}_p^D = 0$. 
\end{lem}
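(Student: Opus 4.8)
The plan is to exploit the mirror structure of a symmetric effective sample to show that the right and left local polynomial fits of $D_i$ at $x_0$ coincide, so that their difference $\hat{\tau}_p^D$ vanishes. Recall from \eqref{eq main estimators}--\eqref{eq mcSplus} that $\hat{\pi}_{p,+}(x_0) = e_1'\mcS_{p,+}^{-1}H_{p,+}'K_+D_+$ and $\hat{\pi}_{p,-}(x_0) = e_1'\mcS_{p,-}^{-1}H_{p,-}'K_-D_-$, and that $\hat{\tau}_p^D = \hat{\pi}_{p,+}(x_0) - \hat{\pi}_{p,-}(x_0)$ by \eqref{eq tau_hat}. I would introduce the $(p+1)\times(p+1)$ sign-flip matrix $J = \textup{diag}(1,-1,1,\dots,(-1)^p)$, which satisfies $J = J' = J^{-1}$ and $e_1'J = e_1'$. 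Using the ordering in Definition \ref{def symmetry}, for $i \in N_-$ write $i = \ell + m$ with $\ell \in N_+$, so that $X_i - x_0 = -(X_\ell - x_0)$ and $D_i = D_\ell$. Because $K(\cdot)$ is symmetric about zero (Assumption \ref{as kernel}(i)), it also follows that $K_h(X_i - x_0) = K_h(X_\ell - x_0)$; this is the only place the symmetry of the kernel enters, and it is what makes the kernel weights match across paired observations, which Definition \ref{def symmetry} does not impose directly.

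The core of the argument is the pair of algebraic identities $\mcS_{p,-} = J\mcS_{p,+}J$ and $H_{p,-}'K_-D_- = JH_{p,+}'K_+D_+$. For the first, the $(a,b)$ entry of $\mcS_{p,-}$ is $\sum_{i \in N_-}K_h(X_i - x_0)(X_i - x_0)^{a+b-2}$; reindexing by $i = \ell + m$ and using $(-(X_\ell - x_0))^{a+b-2} = (-1)^{a+b-2}(X_\ell - x_0)^{a+b-2}$ together with kernel symmetry gives $(-1)^{a+b-2}[\mcS_{p,+}]_{ab} = [J\mcS_{p,+}J]_{ab}$. The identical bookkeeping on the $a$-th component of $H_{p,-}'K_-D_-$ produces the factor $(-1)^{a-1}$, i.e.\ premultiplication by $J$. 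Combining these (and noting $\mcS_{p,+}$, hence $\mcS_{p,-}$, is invertible by Assumption \ref{as full rank}(iii)),
\begin{align*}
    \hat{\pi}_{p,-}(x_0) = e_1'\mcS_{p,-}^{-1}H_{p,-}'K_-D_- = e_1'J\mcS_{p,+}^{-1}J \cdot J H_{p,+}'K_+D_+ = e_1'\mcS_{p,+}^{-1}H_{p,+}'K_+D_+ = \hat{\pi}_{p,+}(x_0),
\end{align*}
using $J^2 = I$ and $e_1'J = e_1'$, so that $\hat{\tau}_p^D = \hat{\pi}_{p,+}(x_0) - \hat{\pi}_{p,-}(x_0) = 0$.

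An equivalent and perhaps more transparent route is to work at the level of the weighted least-squares problem \eqref{eq programme}: after reindexing $N_-$ as above, the reparametrization $\beta_j \mapsto (-1)^j\beta_j$ carries the objective that defines $\hat{\pi}_{p,-}(x_0)$ exactly into the objective that defines $\hat{\pi}_{p,+}(x_0)$, so by uniqueness of the minimizer the fitted coefficients satisfy $\hat{\beta}_j^+ = (-1)^j\hat{\beta}_j^-$, and in particular the intercepts agree. I do not anticipate any substantive obstacle; the only points requiring care are the indexing convention for the sign-flip matrix $J$ and making explicit that the kernel weights of paired observations coincide by Assumption \ref{as kernel}(i) rather than by Definition \ref{def symmetry} itself.
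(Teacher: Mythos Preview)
Your proposal is correct and follows essentially the same approach as the paper: both introduce the sign-flip matrix $J=\textup{diag}(1,-1,\dots,(-1)^p)$ (the paper calls it $\mcO_{p+1}$), use kernel symmetry to match $K_h$ values across paired observations, and deduce $\mcS_{p,-}=J\mcS_{p,+}J$ together with the corresponding identity for the regressand, from which $e_1'J=e_1'$ yields $\hat{\pi}_{p,+}(x_0)=\hat{\pi}_{p,-}(x_0)$. The only cosmetic difference is that the paper phrases the conclusion at the level of the individual weights $\omega_{p,+,i}=\omega_{i+m,p,-}$, whereas you work directly with the matrix form $e_1'\mcS^{-1}H'KD$.
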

\begin{proof}[Proof of Lemma \ref{lem symmetry}]
By the symmetry of the kernel in Assumption \ref{as kernel}, then $K_h(X_i - x_0) = K_h(-(X_{\Sigma(i)} - x_0)) = K_h(X_{\Sigma(i)} - x_0)$, and also $(X_i - x_0)^j = (-(X_{\Sigma(i)} - x_0))^j = (-1)^j(X_{\Sigma(i)} - x_0)^j$. Recall $\mcS_{p,+} = \sum_{i \in \mcN_+} K_h(X_i - x_0) H_{p,i}  H_{p,i}'$, for $H_{p,i}$ the $(p+1)$-vector with $j^{th}$ element $(X_i - x_0)^{j-1}$. Let $\mcO_{p+1} = \textup{diag}(1,-1,1,\hdots,(-1)^p)$, where the following properties of $\mcO_{p+1}$ are immediate: $\mcO_{p+1} = \mcO_{p+1}'$, $\mcO_{p+1} = \mcO_{p+1}^{-1}$, and $\mcO_{p+1}^2 = I_{p+1}$. Then, $H_{p,i} = \mcO_{p+1} H_{p,\Sigma(i)}$, $\mcS_{p,+} = \mcO_{p+1}  \mcS_{p,-} \mcO_{p+1}$, and
\begin{align*}
    \omega_{p,+,i} = e_1' (\mcO_{p+1}\mcS_{p,-}\mcO_{p+1})^{-1}\mcO_{p+1} H_{p,\Sigma(i)} = e_1' \mcO_{p+1}\mcS_{p,-}^{-1} H_{p,\Sigma(i)} = \omega_{p,-,\Sigma(i)},
\end{align*}
where the final equality follows as $e_1'\mcO_{p+1} = e_1'$. As $K_h(X_i - x_0) = K_h(X_{\Sigma(i)} - x_0)$, $D_i = D_{\Sigma(i)}$ and $\omega_{p,+,i} = \omega_{p,-,\Sigma(i)}$ for each $i \in \mcN_+$, $\Sigma(i) \in \mcN_-$, then
\begin{align*}
    \hat{\tau}_p^D &= \displaystyle\sum_{i \in \mcN_+} D_i K_{h}(X_i - x_0)\omega_{p,+,i} - \displaystyle\sum_{i \in \mcN_+} D_{\Sigma(i)} K_{h}(X_{\Sigma(i)} - x_0)\omega_{p,-,\Sigma(i)} \\
    &= \displaystyle\sum_{i \in \mcN_+} D_i K_{h}(X_i - x_0)\omega_{p,+,i} - \displaystyle\sum_{i \in \mcN_+} D_i K_{h}(X_i - x_0)\omega_{p,+,i}= 0. \tag*{\QEDB}
\end{align*}
\end{proof}

\begin{lem} \label{lem differentiability}
Let Assumptions \ref{as fx}-\ref{as full rank} hold. Consider an effective sample configuration $(X,D)$ that is symmetric such that $X \in (\mcX_h^\circ \setminus \{x_0\})^{n_h}$, where $\mcA^\circ$ denotes the interior of $\mcA$. Then,  for fixed $D$, there exists an open ball $\mcB_\ve(X)$ of $X$ with radius $\ve > 0$, such that $g(X) \in C_b^\vs (\mcB_\ve(X))$, for $g : \mcX_h^{n_h} \to \mbR$ such that $g(X) = \hat{\tau}_p^D$.
\end{lem}
\begin{proof}[Proof of Lemma \ref{lem differentiability}]
Write $g(X) = e_1'(\mcS_{p,+}^{-1}H_{p,+}'K_+D_+ - \mcS_{p,-}^{-1}H_{p,-}'K_-D_-)$, as in \eqref{eq tau p matrix}. As $X \in (\mcX_h^\circ \setminus \{x_0\})^{n_h}$, choose some small $\ve > 0$ such that $\mcB_\ve(X) \subset (\mcX_h^\circ \setminus \{x_0\})^{n_h}$. Term by term, $H_{p,+}$ and $H_{p,-}$ are matrices with entries polynomial in $X_i - x_0$, and so are $C_b^{\infty}$ over $\mcB_{\ve}(X)$. Since $\mcB_\ve(X) \subset (\mcX_h^\circ \setminus \{x_0\})^{n_h}$, all the kernel arguments lie within $(-1,1)\setminus \{0\}$, and therefore $K_+$ and $K_-$ are $C_b^\vs$ over $\mcB_\ve(X)$ by Assumption \ref{as kernel}. Similarly, entries for $\mcS_{p,+}$ and $\mcS_{p,-}$ are sums of products of entries from $H_{p,+}$ and $K_+$, and $H_{p,-}$ and $K_-$, respectively, and so $\mcS_{p,+}$ and $\mcS_{p,-}$ are both $C_b^\vs$ over $\mcB_\ve(X)$. By Assumption \ref{as full rank}, the minimum eigenvalues of $\mcS_{p,+}$ and $\mcS_{p,-}$ are uniformly bounded away from 0, and so differentiability is preserved after matrix inversion. Vectors $e_1$, $D_+$, and $D_-$ are treated as constant vectors and so do not affect differentiability. As $|g(X)| \leq C_{\hat{\tau}_p^D} < \infty$ (see Lemma \ref{lem gradient upper}), then $g$ is bounded. Since $g(X)$ is bounded and is constructed via matrix-vector products, matrix inversion, and linear transformations, all of which preserve differentiability, then $g \in C_b^\vs(\mcB_\ve(X))$. \QEDB
\end{proof}

\begin{lem}\label{lem gradient upper}
Let Assumptions \ref{as fx}-\ref{as full rank} hold. Then, there exists some $\overline{C}_{\nabla} < \infty$ such that $\norm{\nabla g(X)}_2 \leq \overline{C}_{\nabla}$, with $g(X) = \hat{\tau}_p^D$ and $X \in (\mcX_h^\circ \setminus \{x_0\})^{n_h}$. Further, there exists $C_{\hat{\tau}_p^D} < \infty$ such that $|g(X)| \leq C_{\hat{\tau}_p^D}$ for every $X \in \mcX_h^{n_h}$. 
\end{lem}
\begin{proof}[Proof of Lemma \ref{lem gradient upper}]
Consider first $|\partial\omega_{p,+,i}/\partial X_j|$ for $j \in \mcN_+$. Then, 
\begin{align}\label{eq S deriv bound}
    \left|\left[\frac{\partial S_{p,+}}{\partial X_j}\right]_{k,\ell}\right|
    &= \left|\frac{\partial}{\partial X_j} K_h(X_j - x_0)(X_j - x_0)^{k+\ell-2}\right| \notag \\ 
    &\leq [L_K + C_K(k+\ell-2)]h^{k+\ell-4},
\end{align}
where $[\cdot]_{k,\ell}$ denotes the $(k,\ell)$ element of a matrix, and $L_K$ is the Lipschitz constant for $K(\cdot)$ of Assumption \ref{as kernel}. As $\left[H_{p,+,i}\right]_{k} = (X_i - x_0)^{k-1}$, then
\begin{align}\label{eq deriv H bound}
    \left|\left[H_{p,+,i}\right]_{k}\right| \leq h^{k-1}, \,\,\,
    \left|\left[\frac{\partial H_{p,+,i}}{\partial X_j}\right]_{k}\right| \leq (k-1)h^{k-2}.
\end{align}
The bounds in \eqref{eq S deriv bound} and \eqref{eq deriv H bound} then give $\norm{H_{p,+,i}}_{\infty} \leq \max\{1,h^{p}\}$ and
\begin{align}\label{eq H norm bound}
    \norm{\frac{\partial S_{p,+}}{\partial X_j}}_{\infty}\leq (L_K+2pC_K)h^{2p-2}, \,\,\,  \norm{\frac{\partial H_{p,+,i}}{\partial X_j}}_{\infty} \leq \max\{1,ph^{p-1}\}.
\end{align}
Let $\theta > 0$ such that the minimum eigenvalues of $\mcS_{p,+}$ and $\mcS_{p,-}$ satisfy $\textup{mineig}(\mcS_{p,+}) \geq \theta$, $\textup{mineig}(\mcS_{p,-}) \geq \theta$, where $\theta$ exists by Assumption \ref{as full rank}(iv). Then,
\begin{align}\label{eq omega bound}
    |\omega_{p,+,i}| \leq \norm{e_1}_2 \norm{S_{p,+}^{-1}}_2 \norm{H_{p,+,i}}_2 \leq \theta^{-1}\sqrt{p+1}\max\{1,h^{p}\}:=C_{\omega}
\end{align}
and also
\begin{align}\label{eq omega deriv bound}
    \left|\frac{\partial \omega_{p,+,i}}{\partial X_j}\right| &\leq \norm{e_1}_2\norm{S_{p,+}^{-1}}_2^2\norm{\frac{\partial S_{p,+}}{\partial X_j}}_2 \norm{H_{p,+,i}}_2 + \norm{e_1}_2\norm{S_{p,+}^{-1}}_2 \norm{\frac{\partial H_{p,+,i}}{\partial X_j}}_2 \notag \\
    &\leq \theta^{-2}(p+1)^{3/2}(L_K + 2pC_K)h^{2p-2}\max\{1,h^p\}+\theta^{-1}\max\{1,ph^{p-1}\} \notag \\
    &:= C_{\partial\omega} < \infty,
\end{align}
where the second line follows as for an $\ell\times \ell$ matrix $A$, then $\norm{A}_2 \leq \ell \norm{A}_{\infty}$, and for vector $b\in\mbR^{\ell}$, then $\norm{b}_2 \leq \sqrt{\ell}\norm{b}_\infty$. For $\delta_{ij}$ the Kronecker delta, then
\begin{align*}
    \left|\frac{\partial}{\partial X_j}D_i K_h(X_i - x_0)\omega_{p,+,i}\right|
    &\leq \left|K_h(X_i - x_0)\right| \left|\frac{\partial \omega_{p,+,i}}{\partial X_j}\right| + \delta_{ij}\left|K'_h(X_j-x_0)\right|\left|\omega_{p,+,j}\right|,
\end{align*}
which given \eqref{eq omega deriv bound}, then implies
\begin{align*}
    \left|\frac{\partial}{\partial X_j}\sum_{i\in \mcN_+}D_i K_h(X_i - x_0)\omega_{p,+,i}\right| \leq nC_KC_{\partial\omega}/h + L_KC_{\omega}/h^2 < \infty.
\end{align*}
By standard local polynomial regression results, then $\sum_{i\in \mcN_+}K_h(X_i - x_0)\omega_{p,+,i} = 1$ given Assumptions \ref{as kernel} and \ref{as full rank}. This implies that $\partial\sum_{i\in \mcN_+}K_h(X_i - x_0)\omega_{p,+,i}/\partial X_j = 0$, so
$|\partial g/\partial X_j| \leq nC_KC_{\partial\omega}/h + L_KC_{\omega}/h^2 < \infty$. By symmetry for $j \in \mcN_-$, then
\begin{align}\label{eq gradient upper bound}
    \norm{\nabla g(X)}_2 \leq \sqrt{n}\left(nC_KC_{\partial\omega}/h + L_KC_{\omega}/h^2\right) := \overline{C}_{\nabla} < \infty.
\end{align}
Further, \eqref{eq omega bound} and Assumption \ref{as kernel} also lead to
\begin{align*}
    |g(X)| &\leq \sum_{i\in \mcN_+}|D_i | |K_h(X_i - x_0)| |\omega_{p,+,i}| + \sum_{i\in \mcN_-}|D_i | |K_h(X_i - x_0)| |\omega_{p,-,i}|\\
    &\leq n_+C_KC_{\omega}/h + n_-C_KC_{\omega}/h \leq nC_KC_{\omega}/h := C_{\hat{\tau}_p^D} < \infty. \tag*{\QEDB}
\end{align*}
\end{proof}

\begin{lem}\label{lem bounded away}
Let Assumptions \ref{as fx}-\ref{as gradient} hold. Then, there exists some $\delta_d > 0$ such that the density $f_{\hat{\tau}_p^D}(t_d)$ of $\hat{\tau}_p^D$ is bounded away from zero for every $t_d \in(-\delta_d,\delta_d)$. 
\end{lem}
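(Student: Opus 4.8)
The plan is to lower-bound $f_{\hat{\tau}_p^D}$ near $0$ by isolating a single component of the mixture in \eqref{eq f_tauDplus} built around a \emph{symmetric} configuration of the effective sample, at which the denominator vanishes exactly. Concretely, I would pick $m \ge p+2$ with $2m \le n$ and $\mbP\{n_+ = m, n_- = m\} > 0$ (feasible by Assumption \ref{as full rank} together with $f_X$ being bounded away from $0$ near $x_0$), fix a treatment pattern $D^*$ that is symmetric ($D_i^* = D_{i+m}^*$) and has treated and untreated units both above and below the cutoff, and fix a symmetric position vector $x^* \in (\mcX_{h,+} \cap \mcN(x_0)) \times (\mcX_{h,-} \cap \mcN(x_0))$ with $x^*_{i+m} = 2x_0 - x^*_i$, all coordinates distinct and bounded away from $x_0$. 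Writing $g_{D^*}(X) = \hat{\tau}_p^D$ for this fixed $D^*$ and effective sample sizes $(m,m)$, Lemma \ref{lem symmetry} gives $g_{D^*}(x^*) = 0$.

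Next I would exploit the regularity of $g_{D^*}$ around $x^*$. By Lemma \ref{lem differentiability}, $g_{D^*} \in C_b^\infty$ on some open ball $B = B(x^*, \rho) \subset \big((\mcX_{h,+}\cup\mcX_{h,-}) \cap \mcN(x_0)\big)^{2m}$ avoiding $x_0$; by Assumption \ref{as gradient} (and Lemma \ref{lem gradient upper} for the upper bound), after shrinking $\rho$ there are constants $0 < \kappa \le \overline{C}_\nabla < \infty$ with $\kappa \le \norm{\nabla g_{D^*}(X)}_2 \le \overline{C}_\nabla$ for all $X \in B$. Hence $g_{D^*}$ is a submersion on $B$, and by the implicit function theorem each level set $g_{D^*}^{-1}(t) \cap B$ is a $C^\infty$ hypersurface; since $g_{D^*}(x^*) = 0$, there are $\delta_t > 0$ and $A > 0$ with $\mcH^{2m-1}\big(g_{D^*}^{-1}(t) \cap B\big) \ge A$ for every $t \in (-\delta_t, \delta_t)$, because the level set through $x^*$ has positive $(2m-1)$-measure and nearby level sets vary continuously as $\norm{\nabla g_{D^*}}_2$ stays above $\kappa$ on $B$. (Equivalently, one slices along a coordinate $X_j$ with $\partial g_{D^*}/\partial X_j (x^*) \neq 0$: strict monotonicity of $X_j \mapsto g_{D^*}(X_j, X_{-j})$ shows that for $X_{-j}$ in a small ball the image contains a fixed interval around $0$.)

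Then I would combine these with the coarea formula. The density in \eqref{eq f_tauDplus} is a nonnegative sum, over effective sample sizes and treatment patterns, of pushforwards of the product density $\prod f_+ \prod f_-$ weighted by the treatment probabilities $\prod \pi \prod (1-\pi)$; by the coarea formula the $(m,m)$, $D^*$ component, restricted to $B$, contributes
\begin{align*}
    f_{\hat{\tau}_p^D}(t) \ge \mbP\{n_+ = m, n_- = m\} \int_{g_{D^*}^{-1}(t) \cap B} \frac{\big[\prod \pi \prod(1-\pi)\big]\,\big[\prod f_+ \prod f_-\big]}{\norm{\nabla g_{D^*}(x)}_2}\, d\mcH^{2m-1}(x).
\end{align*}
On $B$ the treatment probabilities are bounded below by $c_\pi^{2m}$ with $c_\pi = \min\{\underline{\pi}, 1 - \overline{\pi}\} > 0$ (Assumption \ref{as fx}(iv)), the product density is bounded below by some $\underline{f}^{2m} > 0$ (Assumption \ref{as fx}(iii), as $B \subset \mcN(x_0)^{2m}$), and $\norm{\nabla g_{D^*}}_2 \le \overline{C}_\nabla$. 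Hence for all $t \in (-\delta_t, \delta_t)$,
\begin{align*}
    f_{\hat{\tau}_p^D}(t) \ge \frac{\mbP\{n_+ = m, n_- = m\}\, c_\pi^{2m}\, \underline{f}^{2m}}{\overline{C}_\nabla}\, \mcH^{2m-1}\big(g_{D^*}^{-1}(t) \cap B\big) \ge \frac{\mbP\{n_+ = m, n_- = m\}\, c_\pi^{2m}\, \underline{f}^{2m}\, A}{\overline{C}_\nabla} =: \tilde{C}^*_\tau > 0,
\end{align*}
which is the claim and supplies the constant used in the proof of Theorem \ref{thm no moments}.

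I expect the main obstacle to be the uniform lower bound $\mcH^{2m-1}(g_{D^*}^{-1}(t) \cap B) \ge A$ over a whole neighbourhood of $0$: one must show that a single ball $B$ and a single $\delta_t$ serve all $t$ simultaneously, which rests on the gradient being uniformly bounded away from $0$ on $B$ rather than merely nonzero at $x^*$ — this is precisely where Lemmas \ref{lem differentiability} and \ref{lem gradient upper} together with Assumption \ref{as gradient} enter, and why the degenerate $p=0$ uniform-kernel case (whose denominator is discrete, not continuous) must be excluded and handled separately. A secondary, more bookkeeping-level obstacle is verifying feasibility of the chosen symmetric configuration: attainable effective sample sizes $(m,m)$, a compatible symmetric treatment pattern with the required variation, and interior positions lying inside $\mcN(x_0)$ and away from $x_0$.
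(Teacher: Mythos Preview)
Your proposal is correct and follows essentially the same route as the paper: isolate a symmetric configuration $(x^*,D^*)$ with $g_{D^*}(x^*)=0$ (Lemma~\ref{lem symmetry}), use smoothness plus two-sided gradient bounds on a ball (Lemmas~\ref{lem differentiability}, \ref{lem gradient upper}, Assumption~\ref{as gradient}), then apply the coarea formula together with a uniform lower bound on $\mcH^{2m-1}$ of the level sets and the lower bounds on $f_X$ and $\pi$. Two minor differences worth flagging: (i) the paper makes the uniform Hausdorff-measure lower bound explicit by building the local diffeomorphism $\Psi(X)=(g(X),X_2,\ldots,X_{2m})$ and using a bilipschitz estimate, which is precisely the ``slicing along a coordinate'' argument you sketch; (ii) rather than bounding $f_{\hat{\tau}_p^D}(t)$ pointwise via coarea as you do, the paper first bounds the integrated probability $\mbP\{|\hat{\tau}_p^D|\leq\delta_g\}$, applies the mean value theorem to get $f_{\hat{\tau}_p^D}(0)\geq C_\tau^*/2$, and then uses continuity of the density to extend to a neighbourhood---your direct route is slightly cleaner and yields the same conclusion.
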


\begin{proof}[Proof of Lemma \ref{lem bounded away}]
Consider an effective sample configuration $(X,D)$ that is symmetric, with $X \in (\mcX_h^\circ \setminus \{x_0\})^{n_h}$ and $n_+ = n_- = n_h/2$. Again, consider $g : \mcX_h^{n_h} \to \mbR$ such that $g(X) = \hat{\tau}_{p}^D$ for fixed $D$. By Lemma \ref{lem symmetry}, then $g(X) = 0$, by Lemma \ref{lem differentiability}, then $g\in C_b^1(\mcB_{\ve}(X))$ for some $\ve > 0$, and by Assumption \ref{as gradient}, then $\nabla g(X) \neq 0$.

Suppose without loss of generality that $\partial g(X)/\partial X_1 \neq 0$. By the continuity of $g(\cdot)$ at $X$ and $g(X) = 0$, there exists some ball $\mcB_{\vr}(X) \subset \mcB_{\ve}(X)$ with $\vr > 0$ such that $|g(x)|< \delta_g$ for $x\in \mcB_{\vr}(X)$ for some $\delta_g > 0$. Also note that
\begin{align}\label{eq gradient bounds}
    \inf_{x\in \textup{cl}(\mcB_{\vr}(X))} \norm{\nabla g(x)}_2 \geq \underline{C}_{\nabla} > 0, \,\,\, \sup_{x\in \textup{cl}(\mcB_{\vr}(X))} \norm{\nabla g(x)}_2 \leq \overline{C}_{\nabla} < \infty,
\end{align}
with $\textup{cl}(\mcA)$ denoting the closure of set $\mcA$, and where $\underline{C}_{\nabla}$ follows by Assumption \ref{as gradient}, and $\overline{C}_{\nabla}$ is the bound in \eqref{eq gradient upper bound} from Lemma \ref{lem gradient upper}. Define $\mcM_t = \{x \in \mcB_{\vr}(X): g(x) = t\}$ as the level sets of $g$ for $t< |\delta_g|$ within $\mcB_{\vr}(X)$. Since $g \in C_b^1(\mcB_\vr(X))$ with gradient bounded away from zero over $\mcB_\vr(X)$ by \eqref{eq gradient bounds}, then by the regular value theorem, $\mcM_t$ defines a $C^1$ $(n_h-1)$-dimensional submanifold (see Theorem 3.2 of \citet{hirsch1976differential}). Consider the map $\Psi: \mcB_{\vr}(X)\to \mbR^{n_h}$ defined as $\Psi(x) = (g(x),x_2,\hdots,x_{n_h})$. By the inverse function theorem, then $\Psi$ is a (local) $C^1$ diffeomorphism \citep{lafontaine2015introduction}. Choose some open ball $\mcB_{\vr'}(X)$ with $\vr'<\vr$, and so $\textup{cl}(\mcB_{\vr'}(X))\subset \mcB_{\vr}(X)$. Then, both $\Psi$ and $\Psi^{-1}$ have bounded derivatives over $\textup{cl}(\mcB_{\vr'}(X))$ and $\Psi(\textup{cl}(\mcB_{\vr'}(X)))$ respectively. Set $L_{\Psi} = \max\{L_{\Psi,1},L_{\Psi,2}\}$ for
\begin{align*}
    L_{\Psi,1} = \sup_{x\in \textup{cl}(\mcB_{\vr'}(X))} \norm{J\Psi(x)}_{op} < \infty, \,\,\, L_{\Psi,2} = \sup_{y\in \Psi(\textup{cl}(\mcB_{\vr'}(X)))} \norm{J\Psi^{-1}(y)}_{op} < \infty,
\end{align*}
where $J f$ is the Jacobian of $f$, and $\norm{\cdot}_{op}$ is the operator norm.
Then,
\begin{align*}
    L_{\Psi}^{-1}\norm{x_1 - x_2}_2 \leq \norm{\Psi(x_1) - \Psi(x_2)}_2 \leq L_{\Psi}\norm{x_1 - x_2}_2, \textup{ for } x_1,x_2\in \textup{cl}(\mcB_{\vr'}(X)).
\end{align*}
As $\Psi$ is a $C^1$ diffeomorphism with $\Psi(X) = (0, X_2, \hdots,X_{n_h})$, then there exists some $\delta_g' < \delta_g$ and some non-empty open set $\mcE \subset \mbR^{n_h-1}$ with $(n_h-1)$-dimensional Lebesgue measure $\mcL^{n_h-1}(\mcE) > 0$, such that $(-\delta_g',\delta_g') \times \mcE \subset \Psi(\textup{cl}(\mcB_{\vr'}(X)))$. Then, $\Psi^{-1}(\{t\} \times \mcE) \subset \mcM_t \cap \textup{cl}(\mcB_{\vr'}(X))$ gives
\begin{align}\label{eq hausdorff bound}
    \mcH^{n_h-1}(\mcM_t) \geq \mcH^{n_h-1}(\Psi^{-1}(\{t\} \times \mcE))
    \geq L_{\Psi}^{-(n_h-1)}\mcL^{n_h-1}(\mcE) := C_{\mcM} > 0,
\end{align}
where $\mcH^k(\mcA)$ is the $k$-dimensional Hausdorff measure of $\mcA$. Then, for $t<|\delta_g|$, the coarea formula (see e.g., {\citet{Federer1969}}) gives
\begin{align}\label{eq federer coarea formula}
    f_{g(X)|D}(t) = \int_{\mcM_t}\frac{f_{X|D}(x|D)}{\norm{\nabla g(x)}_2}d\mcH^{n_h-1}(x).
\end{align}
But, $0 \leq \underline{C}_f/\overline{C}_{\nabla} \leq f_{X|D}(x|D)/\norm{\nabla g(x)}_2 \leq \overline{C}_f/\underline{C}_{\nabla} < \infty$, where $0<\underline{C}_f \leq \overline{C}_f < \infty $ exist by Assumptions \ref{as full rank}(i)-(ii) and Bayes' rule. Therefore,
\begin{align}
    \mbP\{|g(X)| < \delta_g|D\} &= \int_{-\delta_g}^{\delta_g}\left[\int_{\mcM_t}\frac{f_{X|D}(x|D)}{\norm{\nabla g(x)}_2} d\mcH^{n_h-1}(x)\right]dt \label{eq coarea double}\\
    &\geq \int_{-\delta_g}^{\delta_g}\left[\int_{\mcM_t}\underline{C}_f\overline{C}_{\nabla}^{-1} d\mcH^{n_h-1}(x)\right]dt \geq 2\underline{C}_f\overline{C}_{\nabla}^{-1}C_{\mcM} \delta_g > 0, \label{eq coarea bound}
\end{align}
where \eqref{eq coarea double} follows from \eqref{eq federer coarea formula}, and \eqref{eq coarea bound} uses $f_{X|D}(x|D)/\norm{\nabla g(x)}_2 \geq \underline{C}_f/\overline{C}_{\nabla} > 0$ and \eqref{eq hausdorff bound}.\footnote{See \citet{negro2024sample} for finite-sample distribution theory using the coarea formula.} The upper bound $f_{X|D}(x|D)/\norm{\nabla g(x)}_2 \leq \overline{C}_f/\underline{C}_{\nabla}$ ensures that the integrand in \eqref{eq coarea double} is uniformly bounded for $t<|\delta_g|$. Therefore,
\begin{align*}
     \mbP\{|\hat{\tau}_p^D| <\delta_g\} &\geq \mbP\{|g(x)| \leq \delta_g|D\} \cdot \mbP\{D|n_h\}\cdot \mbP\{N_+ = n_+, N_- = n_-\} \\
     &\geq 2\underline{C}_f\overline{C}_{\nabla}^{-1}C_{\mcM}\delta_g \cdot C_{\pi} \cdot  C_{N_+,N_-} = C_{f,\tau}^* \delta_g > 0,
\end{align*}
where $\mbP\{D|n_h\} \geq \min\{\underline{\pi},1-\overline{\pi}\}^{n_h} := C_\pi > 0$, and $C_{N_+,N_-} = \min_{(n_+,n_-)\in \mathscr{N}} \mbP\{N_+ = n_+, N_- = n_-\} > 0$ for $\mathscr{N} = \{(n_+,n_-)\in \mbN^2:n_+,n_-\geq p+1,n_++n_-\leq n\}$, the finite set of admissible $(n_+, n_-)$ that occur with positive probability (see Lemma OB.1 in the Online Appendix for the probability mass function of the truncated multinomial).

Continuity of the density follows from \eqref{eq coarea double}, and so by the mean value theorem for definite integration, there exists some $t_d^* \in (-\delta_g,\delta_g)$ such that
\begin{align*}
    \int_{-\delta_g}^{\delta_g} f_{\hat{\tau}_p^D}(t_d)dt_d = 2\delta_g f_{\hat{\tau}_p^D}(t_d^*) \implies f_{\hat{\tau}_p^D}(t_d^*) = \frac{1}{2\delta_g} \int_{-\delta_g}^{\delta_g} f_{\hat{\tau}_p^D}(t_d)dt_d \geq \frac{C_{f,\tau}^*}{2}.
\end{align*}
As $\delta_g$ can be taken arbitrarily small and therefore $t_d^* \to 0$, then $f_{\hat{\tau}_p^D}(0) \geq C_{f,\tau}^*/2 > 0$. Further, let $\delta_d^* = f_{\hat{\tau}_p^D}(0) - C_{f,\tau}^*/4 > 0$. By continuity, there exists some $\delta_d > 0$ such that $|f_{\hat{\tau}_p^D}(t_d) - f_{\hat{\tau}_p^D}(0)| < \delta_d^*$ for every $t_d \in(-\delta_d, \delta_d)$, so $-\delta_d^* < f_{\hat{\tau}_p^D}(t_d) - f_{\hat{\tau}_p^D}(0)$ gives
$f_{\hat{\tau}_p^D}(t_d) > f_{\hat{\tau}_p^D}(0) -\delta_d^* = f_{\hat{\tau}_p^D}(0) - (f_{\hat{\tau}_p^D}(0) - C_{f,\tau}^*/4) = C_{f,\tau}^*/4 > 0$. \QEDB
\end{proof} 

\begin{lem}\label{lem Gammaplus}
Let Assumptions \ref{as fx}-\ref{as full rank} hold. Then, $0<\tilde{\Gamma}_p \leq C_\Gamma < \infty$ for some $C_\Gamma$. If, in addition, Assumption \ref{as asymptotics}(i) holds, then $\tilde{\Gamma}_p/n_h \overset{p}{\to} c_{\Gamma,p}$ for some $c_{\Gamma,p}>0$.
\end{lem}
\begin{proof}[Proof of Lemma \ref{lem Gammaplus}]
$\Gamma_{p,+} = S_{0,+} - \Upsilon_{p,+}$ is the Schur complement of $\mcU_{p,+}$ in $\mcS_{p,+}$. By Assumption \ref{as full rank}, $\mcS_{p,+}$ is positive-definite. Therefore, $\Gamma_{p,+} > 0$ (see Theorem 7.7.6 of \citet{horn2012matrix}). By the eigenvalue interlacing theorem, then $\textup{min}\,\textup{eig}(\mcU_{p,+})\geq \textup{min}\,\textup{eig}(\mcS_{p,+}) \geq \theta > 0$ by Assumption \ref{as full rank}.
Therefore, as $\mcU_{p,+}$ is positive-definite, then
\begin{align*}
    \Gamma_{p,+} \leq S_{0,+} \leq nC_K/h < \infty.
\end{align*}
The identical argument holds for $\Gamma_{p,-}$. Given strict positivity and finiteness of $\Gamma_{p,+}$ and $\Gamma_{p,-}$, then
\begin{align*}
    0 < \tilde{\Gamma}_p \equiv \frac{\Gamma_{p,+}\Gamma_{p,-}}{\Gamma_{p,+} + \Gamma_{p,-}} \leq n\frac{C_K}{2h}:= C_\Gamma < \infty. 
\end{align*}
For the probability limit, define $\nu_{j,+} = \int_0^1 \upsilon^{\,j}K(\upsilon)d\upsilon$, and let $\vr_{p,+}$ and $\mcV_{p,+}$ be the $p$-vector and $p\times p$ matrix with $k^{th}$ and $k\ell^{th}$ elements $\nu_{k,+}$ and $\nu_{k+\ell,+}$, respectively, and define $\mcH_p = \textup{diag}(h,\hdots, h^p)$. With Assumption \ref{as asymptotics}(i), standard kernel results and $n_h/nh= 2f_X(x_0) + o_p(1)$ together give $S_{0,+}/n_h\overset{p}{\to} \nu_{0,+}/2$, $\mcH_p^{-1}\mcR_{p,+}/n_h\overset{p}{\to} \vr_{p,+}/2$, and $ \mcH_p^{-1}\mcU_{p,+}\mcH_p^{-1}/n_h\overset{p}{\to} \mcV_{p,+}/2$. So, $\Gamma_{p,+}/n_h \overset{p}{\to} \left(\nu_{0,+} - \vr_{p,+}'\mcV_{p,+}^{-1}\vr_{p,+}\right)/2 = c_{\Gamma,p,+}$. Let $\mcQ_{p,+} = \int_0^1 K(\upsilon)q_p(\upsilon)q_p(\upsilon)'d\upsilon$ for $q_p(\upsilon) = (1,\upsilon,\hdots,\upsilon^p)'$. Since $K(\upsilon) > 0$ on $(0,1)$ by Assumption \ref{as kernel} and any non-zero polynomial cannot vanish on an open interval, then $\phi'\mcQ_{p,+}\phi = \int_0^1 K(\upsilon)(\phi'q_p(\upsilon))^2d\upsilon > 0$ for non-zero $\phi \in \mbR^{p+1}$, so $\mcQ_{p,+}$ is positive definite. As $c_{\Gamma,p,+}$ is the Schur complement of $\mcV_{p,+}$ in $\mcQ_{p,+}$, then $c_{\Gamma,p,+} > 0$. The identical argument gives $\Gamma_{p,-}/n_h \overset{p}{\to} c_{\Gamma,p,-} >0$. Therefore,
\begin{align*}
    \frac{\tilde{\Gamma}_p}{n_h} = \frac{(\Gamma_{p,+}/n_h)(\Gamma_{p,-}/n_h)}{\Gamma_{p,+}/n_h + \Gamma_{p,-}/n_h} \overset{p}{\to} \frac{c_{\Gamma,p,+}c_{\Gamma,p,-}}{c_{\Gamma,p,+}+c_{\Gamma,p,-}} = c_{\Gamma,p}> 0. \tag*{\QEDB}
\end{align*}
\end{proof}

\begin{lem}\label{lem positivity}
Let Assumptions \ref{as fx}-\ref{as full rank} and \ref{as deltasample} hold. Then, $\tilde{D}'M_{\tilde{V}_p}\tilde{D} \in [C_\zeta^*, nC_K/h]$ for some $C_\zeta^* > 0$.
\end{lem}

\begin{proof}[Proof of Lemma \ref{lem positivity}]
By definition of $M_{\tilde{V}_{p}}$, then $\tilde{D}'M_{\tilde{V}_{p}}\tilde{D} = \min_{\beta \in \mbR^{2p+1}} \|\tilde{D} - \tilde{V}_p\beta\|_2^2$. Denote $\tilde{D}_{\mcN}$ and $\tilde{V}_{p,\mcN}$ as the subvector and submatrix of $\tilde{D}$ and $\tilde{V}_p$ respectively containing only $i \in \mcN(\delta, \kappa) = \mcN_+(\delta, \kappa) \cup \mcN_-(\delta, \kappa)$ satisfying Assumption \ref{as deltasample}. Then
\begin{align*}
    \tilde{D}'M_{\tilde{V}_{p}}\tilde{D} = \min_{\beta \in \mbR^{2p+1}} \|\tilde{D} - \tilde{V}_p\beta\|_2^2 \geq \min_{\beta \in \mbR^{2p+1}} \|\tilde{D}_{\mcN} - \tilde{V}_{p,\mcN}\beta\|_2^2 = \tilde{D}_{\mcN}'M_{\tilde{V}_{p,\mcN}}\tilde{D}_{\mcN} \geq 0,
\end{align*}
where the final inequality is satisfied since $M_{\tilde{V}_{p,\mcN}}$ is a well-defined projection matrix. Here, equality holds if and only if $\tilde{D}_{\mcN} \in \textup{col}(\tilde{V}_{p,\mcN})$. As $K_h(X_i - x_0) \geq \kappa$ for each $i \in \mcN(\delta, \kappa)$ by Assumption \ref{as deltasample}(ii) (and so $K_{\mcN}^{1/2}$ is invertible), this implies that $\tilde{D}_{\mcN}'M_{\tilde{V}_{p,\mcN}}\tilde{D}_{\mcN} = 0$ if and only if $D_{\mcN} \in \textup{col}(V_{p,\mcN})$. Suppose that this is true, so there exists some non-zero $\beta = (\beta_0, \beta_+', \beta_-')'\in \mbR^{2p+1}$ such that $D = V_p\beta$ i.e.,
\begin{align}\label{eq D polynomial}
    D_i = \beta_0 + \sum_{j=1}^p \beta_{+,j} (X_i - x_0)^j = P_+(X_i - x_0)
\end{align}
for $i \in \mcN_+(\delta, \kappa)$ and polynomial $P_+(X_i - x_0)$, and a similarly defined polynomial such that $D_i = P_-(X_i - x_0)$ for $i \in \mcN_-(\delta, \kappa)$. As $D_i \in \{0,1\}$, then $X_i - x_0$ is a root of either $P_+(x)$ or $P_+(x)-1$. Let $\tilde{P}_+(x) = P_+(x)(P_+(x)-1)$. By \eqref{eq D polynomial}, $P_+(x)$ has degree at most $p$, and so $\tilde{P}_+(x)$ has degree at most $2p$. But there are $2p+1$ distinct values of $X_i - x_0$ for $i \in \mcN_+(\delta, \kappa)$, each a root of $\tilde{P}_+(x)$. Therefore, $\tilde{P}_+(x) = 0$, and so either $P_+(x) = 0$ or $P_+(x) = 1$. But this implies $D_i = 0$ for all $i \in \mcN_+(\delta, \kappa)$ or $D_i = 1$ for all $i \in \mcN_+(\delta, \kappa)$, contradicting Assumption \ref{as deltasample}(iii), and so $\tilde{D}_{\mcN}'M_{\tilde{V}_{p,\mcN}}\tilde{D}_{\mcN} > 0$. Denote by $\Omega_+(\delta, \kappa) \subset \mcX_{h,+}^{2p+1}$ the set
\begin{align*}
    \Omega_+(\delta, \kappa) = \{(x_1,\hdots, x_{2p+1})\in \mcX_{h,+}^{2p+1}: |x_i - x_j| \geq \delta \,\, \forall i \neq j, K_h(x_i - x_0) \geq \kappa \,\, \forall i\}.
\end{align*}
The separation constraints $\{|x_i - x_j| \geq \delta\}$ and kernel constraints $\{K_h(x_i - x_0) \geq \kappa\}$ are both closed, and since $|x_i - x_j| \leq h$ and $K_h(x_i - x_0) \leq C_K/h$, then $\Omega_+(\delta, \kappa)$ is compact. The same applies to the equivalent set $\Omega_-(\delta, \kappa) \subset \mcX_{h,-}^{2p+1}$. $\tilde{D}_{\mcN}' M_{\tilde{V}_{p,\mcN}} \tilde{D}_{\mcN}$ is continuous in $X_{\mcN}$, as $V_{p,\mcN}$ is a full column rank matrix of polynomials in $X_i-x_0$ and the kernel $K_h(X_i-x_0)$ is continuous and bounded away from 0. For admissible $\tilde{D}_\mcN$, then there is a well-defined, strictly positive minimum:
\begin{align*}
     C_{\zeta, D_{\mcN}} = \min_{X_{\mcN} \in \Omega_+(\delta, \kappa) \times \Omega_-(\delta, \kappa)} \tilde{D}_{\mcN}'M_{\tilde{V}_{p,\mcN}}\tilde{D}_{\mcN} > 0.
\end{align*}
Let $\mathscr{D}_p \subset \{0,1\}^{2(2p+1)}$ be the set of admissible values of $D_{\mcN}$ given Assumption \ref{as deltasample}. $\mathscr{D}_p$ is finite, so  $C_\zeta^* = \min_{D_{\mcN} \in \mathscr{D}_p} C_{\zeta, D_{\mcN}} > 0$ is well-defined. Then, $\tilde{D}'M_{\tilde{V}_p}\tilde{D} \geq \tilde{D}_{\mcN}'M_{\tilde{V}_{p,\mcN}}\tilde{D}_{\mcN} \geq C_\zeta^* > 0.$ The upper bound follows as $\tilde{D}'M_{\tilde{V}_p}\tilde{D} \leq \tilde{D}'\tilde{D} \leq nC_K/h$. \QEDB
\end{proof}

\clearpage

\appendix
\begin{center}
{\Large Online Appendix for}\\
\vspace{0.5cm}
{\Large The moment is here: a generalized class of estimators for fuzzy regression discontinuity designs}\\
\vspace{0.5cm}
{\large Stuart Lane}\\
\vspace{0.5cm}
{University of Bristol, stuart.lane@bristol.ac.uk}
\end{center}

\setcounter{table}{0}
\renewcommand{\thesection}{O\Alph{section}}
\renewcommand{\thetable}{OA\arabic{table}}

\vspace{20mm}

\noindent This Online Appendix contains additional results from the large-scale simulation study referenced in the main paper, as well as a characterization of the truncated multinomial distribution.

\section{Additional simulation results}\label{sec extra}

\noindent Here I present results using $X_i \sim 2Beta(2,4)-1$ and also $U_i \sim t(2.5)$, which has finite mean and variance but heavy tails, with no finite third moment. For fairer comparison, $t$-errors are scaled to have median absolute deviation of 0.2 (the $N(0,0.09)$ distribution has MAD $\approx0.202$). I also repeat all experiments using a structural function calibrated to \citet{ludwig2007does}, with $\tau = -3.44$ and $m(\cdot)$ given by
\begin{equation*}
    m(X_i) = 
    \begin{cases}
        3.70 + 2.99X_i + 3.28X_i^2 + 1.45X_i^3 + 0.22X_i^4 + 0.03X_i^5 & \textup{ if} \,\, X_i < 0, \\
        3.70 + 18.49X_i - 54.80X_i^2 + 74.30X_i^3 - 45.02X_i^4 + 9.83X_i^5 & \textup{ if} \,\, X_i \geq 0.
    \end{cases}
\end{equation*}
All estimation tables in this Online Appendix include $\hat{\tau}^{CCF}_{\Lambda(1)}$ and $\hat{\tau}^{IK}_{\Lambda(1)}$ for comparison, and all inference tables include $\mcC_{\Lambda(1)}^{CCF}$, $\mcC_{\Lambda(1)}^{IK}$, and bias-aware AR tests using $ROT_1$ for comparison. I also repeat Tables 5.1 and 5.3 with the new estimators and confidence intervals for completeness and easier comparison.

\begin{sidewaystable}
\centering
\addtolength{\tabcolsep}{-1pt} 
\small
\begin{threeparttable}
\caption{Median bias, absolute median deviation and root mean squared error of estimators}
\vspace{-0.5em}
\centering 
\begin{tabular}{c c|c c c c c c|c c c c c c|c c c c c c}
\hline
\multicolumn{20}{c}{(a) $n = 300$} \\
\hline
& & \multicolumn{6}{c|}{Median bias} & \multicolumn{6}{c|}{Median absolute deviation} & \multicolumn{6}{c}{Root mean squared error} \T \\\cline{2-3}
\hline  
$\pi_j$ & $\pi_0$ & $\hat{\tau}^{CCF}_{1}$ & $\hat{\tau}^{IK}_{1}$ & $\hat{\tau}^{CCF}_{\Lambda(1)}$ & $\hat{\tau}^{IK}_{\Lambda(1)}$ & $\hat{\tau}^{CCF}_{\Lambda(4)}$ & $\hat{\tau}^{IK}_{\Lambda(4)}$ &
$\hat{\tau}^{CCF}_{1}$ & $\hat{\tau}^{IK}_{1}$ & $\hat{\tau}^{CCF}_{\Lambda(1)}$ & $\hat{\tau}^{IK}_{\Lambda(1)}$ & $\hat{\tau}^{CCF}_{\Lambda(4)}$ & $\hat{\tau}^{IK}_{\Lambda(4)}$ &
$\hat{\tau}^{CCF}_{1}$ & $\hat{\tau}^{IK}_{1}$ & $\hat{\tau}^{CCF}_{\Lambda(1)}$ & $\hat{\tau}^{IK}_{\Lambda(1)}$ & $\hat{\tau}^{CCF}_{\Lambda(4)}$ & $\hat{\tau}^{IK}_{\Lambda(4)}$ \T \\ 
\hline
\multirow{4}{*}{1} & 0.2 & 0.09 & 0.12 & 0.01 & 0.04 & -0.01 & -0.00 & 0.50 & 0.49 & 0.17 & 0.17 & 0.09 & 0.09 & 89.22 & 126 & 0.27 & 0.28 & 0.14 & 0.14 \\
& 0.4 & 0.09 & 0.12 & 0.06 & 0.08 & 0.02 & 0.03 & 0.32 & 0.29 & 0.17 & 0.17 & 0.10 & 0.11 & 16.52 & 43.83 & 0.28 & 0.28 & 0.16 & 0.16 \\
& 0.6 & 0.08 & 0.09 & 0.07 & 0.08 & 0.04 & 0.05 & 0.22 & 0.19 & 0.16 & 0.15 & 0.12 & 0.11 & 12.74 & 36.02 & 0.26 & 0.25 & 0.18 & 0.17 \\
& 0.8 & 0.04 & 0.05 & 0.05 & 0.06 & 0.04 & 0.05 & 0.16 & 0.14 & 0.14 & 0.12 & 0.12 & 0.11 & 60.11 & 0.75 & 0.22 & 0.19 & 0.19 & 0.17 \\
\hline
\multirow{4}{*}{2} & 0.2 & 0.09 & 0.14 & 0.02 & 0.05 & -0.01 & 0.00 & 0.50 & 0.48 & 0.17 & 0.18 & 0.09 & 0.09 & 143 & 41.39 & 0.28 & 0.30 & 0.15 & 0.15 \\
& 0.4 & 0.10 & 0.13 & 0.07 & 0.09 & 0.02 & 0.04 & 0.32 & 0.28 & 0.18 & 0.18 & 0.11 & 0.11 & 35.95 & 10.72 & 0.29 & 0.29 & 0.17 & 0.17 \\
& 0.6 & 0.07 & 0.08 & 0.07 & 0.08 & 0.04 & 0.05 & 0.22 & 0.19 & 0.17 & 0.15 & 0.12 & 0.12 & 79.85 & 6.23 & 0.27 & 0.25 & 0.19 & 0.18 \\
& 0.8 & 0.04 & 0.05 & 0.05 & 0.06 & 0.04 & 0.05 & 0.16 & 0.14 & 0.14 & 0.12 & 0.12 & 0.11 & 4.02 & 1.28 & 0.22 & 0.19 & 0.19 & 0.17 \\
\hline
\multirow{4}{*}{3} & 0.2 & 0.07 & 0.12 & 0.02 & 0.04 & -0.01 & -0.00 & 0.52 & 0.50 & 0.17 & 0.17 & 0.09 & 0.09 & 837 & 76.48 & 0.28 & 0.29 & 0.14 & 0.14 \\
& 0.4 & 0.10 & 0.12 & 0.06 & 0.09 & 0.02 & 0.04 & 0.32 & 0.28 & 0.17 & 0.17 & 0.10 & 0.10 & 18.56 & 375 & 0.28 & 0.28 & 0.16 & 0.16 \\
& 0.6 & 0.07 & 0.09 & 0.07 & 0.08 & 0.04 & 0.06 & 0.22 & 0.19 & 0.17 & 0.16 & 0.12 & 0.12 & 17.79 & 3.02 & 0.26 & 0.25 & 0.18 & 0.18 \\
& 0.8 & 0.05 & 0.05 & 0.05 & 0.06 & 0.04 & 0.05 & 0.16 & 0.14 & 0.14 & 0.12 & 0.12 & 0.11 & 3.60 & 0.47 & 0.22 & 0.19 & 0.19 & 0.17 \\
\hline \\
\hline
\multicolumn{20}{c}{(b) $n = 600$} \\
\hline
& & \multicolumn{6}{c|}{Median bias} & \multicolumn{6}{c|}{Median absolute deviation} & \multicolumn{6}{c}{Root mean squared error} \T \\\cline{2-3}
\hline  
$\pi_j$ & $\pi_0$ & $\hat{\tau}^{CCF}_{1}$ & $\hat{\tau}^{IK}_{1}$ & $\hat{\tau}^{CCF}_{\Lambda(1)}$ & $\hat{\tau}^{IK}_{\Lambda(1)}$ & $\hat{\tau}^{CCF}_{\Lambda(4)}$ & $\hat{\tau}^{IK}_{\Lambda(4)}$ &
$\hat{\tau}^{CCF}_{1}$ & $\hat{\tau}^{IK}_{1}$ & $\hat{\tau}^{CCF}_{\Lambda(1)}$ & $\hat{\tau}^{IK}_{\Lambda(1)}$ & $\hat{\tau}^{CCF}_{\Lambda(4)}$ & $\hat{\tau}^{IK}_{\Lambda(4)}$ &
$\hat{\tau}^{CCF}_{1}$ & $\hat{\tau}^{IK}_{1}$ & $\hat{\tau}^{CCF}_{\Lambda(1)}$ & $\hat{\tau}^{IK}_{\Lambda(1)}$ & $\hat{\tau}^{CCF}_{\Lambda(4)}$ & $\hat{\tau}^{IK}_{\Lambda(4)}$ \T \\ 
\hline
\multirow{4}{*}{1} & 0.2 & 0.14 & 0.20 & 0.06 & 0.10 & 0.01 & 0.03 & 0.47 & 0.45 & 0.17 & 0.19 & 0.09 & 0.09 & 33.85 & 10.44 & 0.28 & 0.30 & 0.14 & 0.15 \\
& 0.4 & 0.11 & 0.14 & 0.09 & 0.12 & 0.04 & 0.07 & 0.25 & 0.22 & 0.17 & 0.17 & 0.10 & 0.11 & 52.17 & 14.69 & 0.26 & 0.27 & 0.16 & 0.17 \\
& 0.6 & 0.07 & 0.09 & 0.08 & 0.09 & 0.05 & 0.07 & 0.16 & 0.14 & 0.14 & 0.13 & 0.11 & 0.11 & 1.35 & 0.32 & 0.22 & 0.20 & 0.17 & 0.16 \\
& 0.8 & 0.04 & 0.05 & 0.05 & 0.06 & 0.04 & 0.06 & 0.12 & 0.10 & 0.11 & 0.10 & 0.10 & 0.09 & 2.25 & 0.16 & 0.17 & 0.15 & 0.15 & 0.14 \\
\hline
\multirow{4}{*}{2} & 0.2 & 0.15 & 0.21 & 0.07 & 0.11 & 0.01 & 0.04 & 0.46 & 0.44 & 0.18 & 0.20 & 0.09 & 0.10 & 56.41 & 3954 & 0.29 & 0.31 & 0.14 & 0.16 \\
& 0.4 & 0.12 & 0.14 & 0.10 & 0.13 & 0.05 & 0.08 & 0.25 & 0.22 & 0.17 & 0.17 & 0.11 & 0.12 & 19.39 & 5.45 & 0.27 & 0.28 & 0.17 & 0.18 \\
& 0.6 & 0.07 & 0.08 & 0.08 & 0.09 & 0.06 & 0.07 & 0.16 & 0.14 & 0.14 & 0.13 & 0.11 & 0.11 & 0.70 & 0.45 & 0.22 & 0.20 & 0.17 & 0.17 \\
& 0.8 & 0.05 & 0.05 & 0.05 & 0.06 & 0.05 & 0.06 & 0.11 & 0.10 & 0.11 & 0.10 & 0.10 & 0.09 & 0.20 & 0.16 & 0.17 & 0.15 & 0.15 & 0.14 \\
\hline
\multirow{4}{*}{3} & 0.2 & 0.13 & 0.19 & 0.05 & 0.10 & 0.01 & 0.03 & 0.46 & 0.43 & 0.17 & 0.19 & 0.09 & 0.09 & 44.08 & 70.92 & 0.28 & 0.30 & 0.14 & 0.15 \\
& 0.4 & 0.11 & 0.14 & 0.10 & 0.13 & 0.04 & 0.07 & 0.25 & 0.22 & 0.17 & 0.17 & 0.11 & 0.11 & 23.93 & 6.08 & 0.27 & 0.27 & 0.16 & 0.17 \\
& 0.6 & 0.07 & 0.09 & 0.08 & 0.09 & 0.05 & 0.07 & 0.16 & 0.14 & 0.14 & 0.13 & 0.11 & 0.11 & 2.31 & 0.68 & 0.22 & 0.20 & 0.17 & 0.16 \\
& 0.8 & 0.04 & 0.05 & 0.05 & 0.06 & 0.04 & 0.06 & 0.11 & 0.10 & 0.11 & 0.10 & 0.10 & 0.09 & 0.42 & 0.16 & 0.16 & 0.15 & 0.15 & 0.14 \\
\hline
\hline
\end{tabular}
\caption*{\citet{lee2008randomized} design, $X_i \sim N(0,1)$, $U_i \sim N(0,0.09)$. Each experiment is repeated 10,000 times. Any values greater than 100 are rounded to the nearest integer for space considerations. This table repeats Table 5.1 in the main text, but also includes $\hat{\tau}^{CCF}_{\Lambda(1)}$ and $\hat{\tau}^{IK}_{\Lambda(1)}$ for comparison.}
\label{table estimation lee normal}
\end{threeparttable}
\end{sidewaystable}

\begin{sidewaystable}
\centering
\addtolength{\tabcolsep}{-1pt} 
\small
\begin{threeparttable}
\caption{Median bias, absolute median deviation and root mean squared error of estimators}
\vspace{-0.5em}
\centering 

\caption*{\citet{lee2008randomized} design, $X_i \sim 2Beta(2,4) - 1$, $U_i \sim t(2.5)$. Each experiment is repeated $10,000$ times.}
\label{table inference lee beta t}
\end{threeparttable}
\centering
\end{table}


\begin{sidewaystable}
\centering
\addtolength{\tabcolsep}{-1pt} 
\small
\begin{threeparttable}
\caption{Median bias, absolute median deviation and root mean squared error of estimators}
\vspace{-0.5em}
\centering 

\caption*{\citet{ludwig2007does} design, $X_i \sim 2Beta(2,4) - 1$, $U_i \sim t(2.5)$. Each experiment is repeated $10,000$ times.}
\label{table inference lm beta t}
\end{threeparttable}
\centering
\end{table}

\clearpage

\section{Truncated multinomial distribution}

\begin{lem}\label{lem truncated multi}
Let $N_1$, $N_2$ and $N_0 = n - N_1 - N_2$ for some $n\in \mbN$ be multinomial with well-defined probabilities $p_1$, $p_2$ and $p_0 = 1 - p_1 - p_2$, such that $p_0,p_1,p_2\geq 0$. Suppose also there are positive integers $\alpha_1, \alpha_2\in \mbN$, with $\alpha_1+\alpha_2 \leq n$. Let $\mcF$ be the event $\mcF = \{ N_1 > \alpha_1, N_2 > \alpha_2\}$. Then
\begin{align}\label{eq multi main}
    \mbP\{N_1 = n_1, N_2 = n_2 | \mcF \} = C(n,p_1,p_2,\alpha_1,\alpha_2)^{-1}\frac{n!}{n_0!n_1!n_2!}p_0^{n_0} p_1^{n_1}p_2^{n_2}
\end{align}
for $n_0 = n - n_1 - n_2$, $n_0 \geq 0$, $n_1 > \alpha_1$, $n_2 > \alpha_2$, and
\begin{align}\label{eq multi constant}
    C(n,p_1,p_2,\alpha_1,\alpha_2) =\sum_{n_1 = \alpha_1+1}^{ n-\alpha_2} \sum_{n_2 = \alpha_2+1}^{n - n_1} \frac{n!}{n_0!n_1!n_2!}p_0^{n_0}p_1^{n_1}p_2^{n_2}.
\end{align}
Further, $\mbP\{N_i = n_i | \mcF\} = C_{n_i}(n,p_1,p_2,\alpha_1,\alpha_2)\,\mbP\{N_i = n_i\}$ for $C_{n_i}(n,p_1,p_2,\alpha_1,\alpha_2) = C(n,p_1,p_2,\alpha_1,\alpha_2)^{-1}\mcW_{n_i}(n,p_i,p_j,\alpha_j)$, with $j \in \{1,2\}$, $j \neq i$, and
\begin{align*}
    \mcW_{n_i}(n,p_i,p_j,\alpha_j) = \sum_{n_j = \alpha_j+1}^{n-n_i}\dbinom{n-n_i}{n_j}\left(\frac{p_0}{1-p_i}\right)^{n_0}\left(\frac{p_j}{1-p_i}\right)^{n_j}.
\end{align*}
\end{lem}

\begin{proof}[Proof of Lemma \ref{lem truncated multi}]
Result \eqref{eq multi main} follows by weighting probability of $(N_1,N_2)$ by $1/\mbP\{N_1 > \alpha_1, N_2 > \alpha_2\}$ to ensure the conditional probabilities sum to 1. For the marginal distribution, consider $N_1$. Then,
\begin{align*}
    \mbP\{N_1 &= n_1|\mcF\} =  C(\cdot)^{-1}\displaystyle\sum_{n_2 = \alpha_2+1}^{n - n_1} \frac{n!}{n_0!n_1!n_2!}p_0^{n_0}p_1^{n_1}p_2^{n_2} \\ 
    &=C(\cdot)^{-1}\dbinom{n}{n_1}p_1^{n_1}\displaystyle\sum_{n_2 = \alpha_2+1}^{n - n_1} \dbinom{n-n_1}{n_2}p_0^{n_0}p_2^{n_2} \\
    &= C(\cdot)^{-1}\dbinom{n}{n_1}p_1^{n_1}(1-p_1)^{n-n_1}\displaystyle\sum_{n_2 = \alpha_2+1}^{n - n_1} \dbinom{n-n_1}{n_2}\left(\frac{p_0}{1-p_1}\right)^{n_0}\left(\frac{p_2}{1-p_1}\right)^{n_2} \\
    &= C_{n_1}(n,p_1,p_2,\alpha_1,\alpha_2)\mbP\{N_1 = n_1\}.
\end{align*}
The third line follows since $(n-n_1)!/(n_2!n_0!) = \dbinom{n-n_1}{n_2}$, and the fourth line uses 
\begin{align*}
    p_0^{n_0}p_2^{n_2} = (1-p_1)^{n-n_1}\left(\frac{p_0}{1-p_1}\right)^{n_0}\left(\frac{p_2}{1-p_1}\right)^{n_2}.
\end{align*}
The result for $\mbP\{N_2 = n_2|\mcF\}$ follows from the symmetric nature of the problem. \QEDB
\end{proof}

\end{document}